\let\mathcal \undefined
\def\mathcal{\mathscr}
\let\emptyset \undefined
\let\ge       \undefined
\let\le       \undefined
\let\leq\le
\let\geq\ge
\theoremstyle{plain}
\newtheorem{theorem}{Theorem}[section]
\theoremstyle{remark}
\theoremstyle{plain}
\newtheorem{lemma}[theorem]{Lemma}
\newtheorem{definition}[theorem]{Definition}
\numberwithin{equation}{section}
\def\N{{\mathbb N}}
\def\R{{\mathbb R}}
\def\C{{\mathbb C}}
\renewcommand{\P}{{\mathbb P}}
\def\om{\omega}
\def\Om{\Omega}
\newcommand{\ran}{\mathsf{R}}
\newcommand{\iprod}[2]{( #1|#2 )}
\newcommand{\calL}{\mathscr{L}}
\newcommand{\n}{\Vert}
\newcommand{\one}{{{\bf 1}}}
\newcommand{\s}{^\star}
\newcommand{\Dom}{{\mathsf{D}}}
\newcommand{\ov}{\overline}
\newcommand{\wt}{\widetilde}
\newcommand{\wh}{\widehat}
\newcommand{\ud}{\,{\rm d}}
\newcommand{\calF}{\mathscr{F}}
\newcommand{\calE}{\mathscr{E}}
\newcommand{\calP}{\mathscr{P}}
\newcommand{\calB}{\mathscr{B}}
\newcommand{\calM}{\mathscr{M}}
\newcommand{\calA}{\mathscr{A}}
\newcommand{\D}{\Dom}
\renewcommand{\H}{\mathscr{H}}
\newcommand{\HS}{\calL^2(H^2(\mathbb{D}))}
\renewcommand{\D}{\mathbb{D}}
\newcommand{\U}{\mathbb{U}}
\begin{document}

 \title{Thermal time as an unsharp observable}
\author{Jan van Neerven}

\address{Delft Institute of Applied Mathematics\\
Delft University of Technology\\P.O. Box 5031\\2600 GA Delft\\The Netherlands}
\email{J.M.A.M.vanNeerven@tudelft.nl}

\author{Pierre Portal}
\address{Pierre Portal
Mathematical Sciences Institute
\\
Australian National University
\\
Ngunnawal and Ngambri Country
\\
Canberra ACT 0200, Australia}

\email{pierre.portal@anu.edu.au}

\date{\today}
\keywords{Connes--Rovelli thermal time, positive operator-valued measures, Tomita--Takesaki theory, modular flow, KMS state, number operator, free massless relativistic particle,
Wodzicki residue, noncommutative integral}

\subjclass[2000]{Primary: 81P05, Secondary: 46L10, 46L60, 81S05, 81R15, 82B10, 82C03}

\thanks{Most of this paper was written while the first-name author visited the Mathematical Science Institute at ANU (Canberra) through the MSRVP programme. He would like to thank the staff at ANU for the kind and generous hospitality.}

\begin{abstract}
We show that the Connes--Rovelli thermal time associated with the quantum harmonic oscillator can be described as an (unsharp) observable, that is, as a positive operator valued measure. We furthermore present extensions of this result to the free massless relativistic particle in one dimension
and to a hypothetical physical system whose equilibrium state is given by the noncommutative integral. \end{abstract}

\maketitle

\section{Introduction}\label{sec:intro}

It was recognised by Bohr in the early days of quantum mechanics that
the formulas $E = h\nu$ and $\lambda = h/p$, with $h$ the Planck constant
suggest the uncertainty relations
$$ \Delta x\,\Delta p \ge \frac{\hbar}{2} \ \ \hbox{and} \ \ \Delta t \,\Delta E \ge  \frac{\hbar}{2}.$$
Whereas position, momentum, and energy could be given a firm mathematical footing
in von Neumann's mathematical foundations of quantum mechanics as {\em observables}, that is,
selfadjoint operators acting on a suitable Hilbert space (`energy' being replaced by a
Hamiltonian operator), this cannot be done for time. In fact, it is a famous observation of Pauli \cite{Pauli}
that if $\mathscr{H}$ (reserving the letter $H$ for Hilbert spaces) is a semi-bounded Hamiltonian operator,
there exists no selfadjoint operator $T$ covariant with the unitary group
generated by $i\mathscr{H}$, that is, it cannot be true that
$$ e^{it\mathscr{H}}Te^{-it\mathscr{H}} = T-t, \quad t\in\R.$$
The common view is that `time', rather than being an observable, should be regarded as a mere bookkeeping index.

The spectral theorem establishes a one-to-one correspondence between selfadjoint operators
on the one hand and projection-valued measures on the real line. Observing that the orthogonal
projections on a Hilbert space $H$ are precisely the extreme points of the set of all bounded operators
$T$ on $H$ satisfying the relation $$0\le T\le I,$$ where $A\le B$ means that $\iprod{Ah}{h}\le \iprod{Bh}{h}$ for all $h\in H$,
a natural generalisation of the notion of a projection-valued measure is obtained upon replacing orthogonal projections
by operators satisfying the above relation. The mathematical theory of the resulting {\em positive operator-valued measures} (POVM) had already been developed in the 1940s by Naimark, and its usefulness in quantum mechanics was first advocated by Ludwig \cite{Ludwig}. It was quickly realised that covariant POVMs could be associated with various time measurements,
such as arrival time, times of occurrence, screen time, and time of flight, and rigorous versions of the time-energy uncertainty relation could be proved for some of them \cite{BruFre2008, BruFre2002,  Busch2008, Kij74, SriVij, Werner}; see also the overviews in \cite{BGL, Muga}.
For the free falling particle with Hamiltonian
$$ \mathscr{H} = \frac{P^2}{2m} - mgQ,$$
with $P$ and $Q$ the position and momentum operators, respectively, it is even possible to specify a selfadjoint operator
covariant with $\mathscr{H}$, namely
$$ T = \frac{P}{mg}.$$
This operator measures time as the increase of momentum. The problem of finding a projection-valued measure (PVM)
covariant with the free-particle
Hamiltonian remained elusive, however. As Busch \cite[p. 95-96]{Busch2008} puts it,
\begin{quote}
``Seemingly obvious candidates of a time operator conjugate to the free-particle
Hamiltonian,
$$ \mathscr{H}_{\rm free} = \frac{P^2}{2m},$$
are given by suitably symmetrised expressions for the time-of-arrival variable
suggested by classical reasoning, for example:
$$ -\frac12 m(QP^{-1}+P^{-1}Q) \ \ \ \hbox{or} \ \ -mP^{1/2} QP^{1/2} .$$
While these expressions formally satisfy the canonical commutation relation,
they are not selfadjoint but only symmetric (on suitably defined dense domains
on which they actually coincide, see Sect. 10.4), and they do not possess
a selfadjoint extensions. Hence this intuitive approach does not lead to a time
observable in the usual sense of a selfadjoint operator conjugate to the free
Hamiltonian.''
\end{quote}
The
problem of constructing a POVM for this setting, covariant with time translations, has been addressed, for instance, in \cite{Gia} using the time-of-arrival variable $-\frac12 m(QP^{-1}+P^{-1}Q)$. See also the books \cite{Holevo, BLPY}, and, in particular, Example 17.5.\cite{BLPY}.
In a similar vein, a description of the four space-time coordinates of an event in terms of a POVM defined on the Minkowski space-time covariant with respect to the
Poincar\'e group has been proposed in \cite{Toller1999a, Toller1999b}, while ``relative'' (or ``relational'') versions of time covariant POVMs have been studied in \cite{LovMiy}.

The aim of the present paper is to address
the problem of finding a POVM associated with the Connes-Rovelli {\em thermal time}. This notion is introduced in Section \ref{sec:thermal}.
After recalling some mathematical preliminaries in Section \ref{sec:prelim},  we provide in Section \ref{sec:number-phase} a positive answer for case of the quantum harmonic oscillator. In Theorem \ref{thm:thermal-L}
we show that the `clock time POVM' associated with the quantum harmonic oscillator as constructed in \cite{GarWong, BGL} transforms covariantly (up to a scale factor $\beta$) with the modular flow canonically associated with the Gibbs state $e^{-\beta N}$, where $N$ is the number operator.
Key to the proof of Theorem \ref{thm:thermal-L} is that this Gibbs state is normal. In order to extend the Theorem \ref{thm:thermal-L} to states that are not necessarily normal, it turns out to be fruitful to relax the definition of thermal time beyond the normal case while retaining the algebraic properties of the modular flow. The resulting notion of time, which we will call {\em modular time}, is introduced in Section \ref{sec:modular-time}.

Our second main result, Theorem \ref{thm:thermal-D}, identifies a POVM for the modular time of a free massless relativistic particle in dimension one at equilibrium with a heat bath of inverse temperature $\beta$. The main difficulty to overcome is that the equilibrium state associated with the Hamiltonian $|D| = \sqrt{\Delta}$ fails to be normal.

In Section \ref{sec:NC} we consider the semifinite singular trace $\om$ associated with the noncommutative integral. This trace, also known as the Dixmier trace, is comprehensively covered in the monograph \cite{LSZ}. It has deep applications in index theory \cite{Car1, Car2, Car3}, and has fundamental connections with general relativity \cite{Kastler} and
the noncommutative geometry approach to the standard model \cite{CC}. The third main result of this paper, Theorem \ref{thm:thermal-Dixmier}, shows that this trace is associated with a POVM on an abstract Hardy space naturally associated with (the Weyl calculus associated with) $\om$.

We conclude the paper with some open problems.

\section{Thermal time}\label{sec:thermal}

The notion of thermal time was introduced by Rovelli \cite{Rov93}
as a tentative solution to the `problem of time' in generally covariant theories (such as general relativity).
Such theories have no preferred time, but only the `internal times' of objects such as clocks. For this reason it has been proven difficult to give a covariant formulation of thermodynamics, which requires a preferred time to define the notion of equilibrium. Rovelli observed that in classical Hamiltonian mechanics it is possible to reverse this reasoning: the statistical state of the system can be used to determine a preferred internal time variable. The time variable determined in this way is then called {\em thermal time}. Rovelli's {\em thermal time hypothesis} is the postulate that this principle is generally applicable and captures the `common sense time flow' experienced as internal time. In a companion paper \cite{Rov-CMB}, Rovelli showed that in a Robertson--Walker universe, the thermal time associated with the cosmic background radiation precisely recovers the Robertson--Walker cosmological time.

The notion of thermal time was extended to the quantum domain by Connes and Rovelli \cite{ConRov} by interpreting the well-known connection between Kubo-Martin-Schwinger equilibrium states (KMS states) and the Tomita--Takesaki theory: a normalised state $\om$ on a $C^*$-algebra $\mathscr{A}$ is a KMS state with respect the Tomita--Takesaki flow in the GNS representation induced by $\om$. It is a deep result of Connes \cite{Con73} that this flow is essentially unique, in the sense that the flows corresponding to two different normalised states are always the same up to conjugation by an inner automorphism (a mapping of the form $A \mapsto UAU^\star$ for some unitary operator $U$). In view of its connection with KMS states,  this makes it natural to {\em interpret the Tomita--Takesaki flow as the thermal time associated with the state $\om$.}

\section{Mathematical preliminaries}\label{sec:prelim}

In this section we discuss the mathematical preliminaries needed for our purposes.
Notation is standard and follows \cite{Nee} unless otherwise stated.

\subsection{Positive operator-valued measures}\label{subsec:NCL2}

We start with recalling the definition and some properties of  positive operator-valued measures (POVMs).
For more information on this topic we refer to \cite{Berberian, BGL, Holevo, Kraus, Landsman1998, Ludwig, Nee}; the reference \cite{Nee} contains complete proofs of all theorems stated below.
POVMs arise in a variety of applications, such as the theory of open quantum systems \cite{DaviesOpen},
quantum information theory \cite{Wilde}, and the theory of symmetric operators \cite{AkhGla2}.

Let $H$ be a complex Hilbert space with inner product $\iprod{\cdot}{\cdot}$ and norm $\n\cdot\n$.
We use the mathematician's convention that
inner products are linear in their first argument and conjugate-linear in their second argument. The Banach space of all bounded linear operators on $H$ is denoted by $\calL(H)$.

An {\em effect} is a bounded operator  $E\in \calL(H)$ satisfying  $0\le E\le I$. The set of effects on $H$ will be denoted by $\mathscr{E}(H)$. The set of its extreme points equals $\calP(H)$, the set of all orthogonal projection on $H$. This suggests generalising the standard definition of a projection-valued measure by replacing the role of orthogonal projections by effects. Accordingly,
a {\em positive op\-erator-valued measure} (POVM)
on a measurable space $(\Om,\calF)$ is defined to be a mapping $E: \calF\to \calE(H)$ assigning to every set $F\in \calF$ an effect $E_F\in \calE(H)$ with
the following properties:
\begin{enumerate}[label={\rm(\roman*)}, leftmargin=*]
 \item\label{it:POVM1} $E_\Om = I$;
 \item\label{it:POVM2}  for all $x\in H$ the mapping $F\mapsto \iprod{E_F x}{x}$
defines a measure $E_{x}$ on $(\Om,\calF)$.
\end{enumerate}
The measure defined by \ref{it:POVM2} is denoted by $E_{x}$. From
$$ E_x(\Om) = \iprod{E_\Om x}{x} = \iprod{x}{x} =\n x\n^2$$ we infer that every measure $E_x$ is finite.
It is easy to see that a POVM $E: \calF\to \calE(H)$ is a projection-valued measure
if and only if for all $F,F'\in \calF$ we have $$E_FE_{F'} = E_{F\cap F'}.$$

The convex set of all probability measures on a measurable space $(\Om,\calF)$ is denoted by $M_1^+(\Om)$.
The quantum analogue of this set is the set $\mathscr{S}(H)$ of all positive trace class operators $T$ on $H$ with unit trace;
in the physics literature such operators are referred to as {\em density matrices}.
The formula $$\om(S) = \tr(ST)$$ establishes a one-to-one correspondence between the operators $T\in \mathscr{S}$ and the set of all {\em normal states}, that is, bounded linear linear functionals $\om:\calL(H)  \to \C$ with the property that $\sum_{n\ge 1}\phi(P_n) = \phi(P)$ whenever $(P_n)_{n\ge 1}$ is a sequence of disjoint orthogonal projections in $H$ and $P$ is their least upper bound.

\begin{theorem}\label{thm:POVM-unsharp}
If $\Phi: \mathscr{S}(H) \to M_1^+(\Om)$ is a convexity preserving mapping, then there exists a unique POVM $E:\calF\to\calE(H)$ such that for all $T\in\mathscr{S}(H)$ we have
$$ (\Phi(T))(F) = \tr(E_F T), \quad F\in \calF\!.$$
\end{theorem}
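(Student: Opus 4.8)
The plan is to build, for each $F\in\calF$, the effect $E_F$ by dualising the map $T\mapsto (\Phi(T))(F)$ against the trace class operators. Write $\calL^1(H)$ for the Banach space of trace class operators on $H$, so that $(\calL^1(H))^\star\cong\calL(H)$ isometrically via $A\mapsto (S\mapsto\tr(AS))$, and observe that $\mathscr{S}(H)$ is a \emph{base} of the positive cone $\calL^1(H)_+$: every nonzero $S\in\calL^1(H)_+$ is uniquely $S=\|S\|_1\,T$ with $T=S/\tr(S)\in\mathscr{S}(H)$. Since $\Phi$ is convexity preserving, i.e.\ $\Phi(\lambda T_0+(1-\lambda)T_1)=\lambda\Phi(T_0)+(1-\lambda)\Phi(T_1)$, and evaluation at a fixed $F$ is affine, the function $f(T):=(\Phi(T))(F)$ is an affine map $\mathscr{S}(H)\to[0,1]$.

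The heart of the argument is to extend $f$ to a bounded linear functional $\tilde f$ on $\calL^1(H)$. First I would set $\tilde f(0):=0$ and $\tilde f(S):=\|S\|_1 f(S/\|S\|_1)$ for $S\in\calL^1(H)_+\setminus\{0\}$, and check additivity on the cone: for $S_1,S_2\in\calL^1(H)_+$ with $c_i=\|S_i\|_1$ not both zero, writing $T_i=S_i/c_i$ one has $S_1+S_2=(c_1+c_2)\big(\tfrac{c_1}{c_1+c_2}T_1+\tfrac{c_2}{c_1+c_2}T_2\big)$, so affineness of $f$ gives $\tilde f(S_1+S_2)=c_1 f(T_1)+c_2 f(T_2)=\tilde f(S_1)+\tilde f(S_2)$; positive homogeneity is immediate. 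Next I would extend $\tilde f$ to self-adjoint trace class operators by the Jordan decomposition $S=S^+-S^-$ (well-definedness following from additivity on the cone, since $S^+_1+S^-_2=S^+_2+S^-_1$ whenever $S^+_1-S^-_1=S^+_2-S^-_2$), and then to all of $\calL^1(H)$ by $\tilde f(S):=\tilde f(\Re S)+i\,\tilde f(\Im S)$; linearity is routine, and $0\le f\le 1$ yields $|\tilde f(S)|\le 2\|S\|_1$, so $\tilde f\in(\calL^1(H))^\star$. By trace duality there is a \emph{unique} $E_F\in\calL(H)$ with $\tilde f(S)=\tr(E_F S)$ for all $S$, and in particular $(\Phi(T))(F)=\tr(E_F T)$ for $T\in\mathscr{S}(H)$.

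It then remains to identify $E_F$ as an effect and to verify the POVM axioms, which I would do by testing against rank-one projections. For a unit vector $x$, let $P_x\in\mathscr{S}(H)$ be the orthogonal projection onto $\C x$; then $\iprod{E_F x}{x}=\tr(E_F P_x)=(\Phi(P_x))(F)\in[0,1]$, and scaling gives $0\le\iprod{E_F x}{x}\le\|x\|^2$ for all $x$, i.e.\ $0\le E_F\le I$. Taking $F=\Om$ yields $\iprod{E_\Om x}{x}=(\Phi(P_x))(\Om)=1=\|x\|^2$ for unit $x$, hence $E_\Om=I$ by polarisation, which is \ref{it:POVM1}. For \ref{it:POVM2}, fix $x\neq 0$: then $F\mapsto\iprod{E_F x}{x}=\|x\|^2(\Phi(P_{x/\|x\|}))(F)$ is $\|x\|^2$ times the probability measure $\Phi(P_{x/\|x\|})$, hence a finite measure on $(\Om,\calF)$; the case $x=0$ is trivial, and countable additivity is inherited directly from that of $\Phi(P_{x/\|x\|})$. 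Uniqueness of $E$ is immediate: if $E,E'$ both work then $\tr((E_F-E'_F)T)=0$ for all $T\in\mathscr{S}(H)$, in particular for all $P_x$, so $\iprod{(E_F-E'_F)x}{x}=0$ for every $x$ and $E_F=E'_F$ (this already follows from uniqueness in the trace duality).

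I expect the only step requiring genuine care to be the extension of $f$ to a bounded linear functional on $\calL^1(H)$ — extracting additivity and homogeneity of $\tilde f$ on the positive cone from the affineness of $\Phi$, and checking that the Jordan-decomposition extension is well defined and linear. Everything else — positivity, the bound $E_F\le I$, the two POVM axioms, and uniqueness — then follows mechanically from trace duality together with evaluation on rank-one projections, and no separability or other infinite-dimensional hypothesis on $H$ is needed.
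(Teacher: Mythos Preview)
The paper does not actually prove Theorem~\ref{thm:POVM-unsharp}; it is stated as a preliminary result, with the remark that \cite{Nee} contains complete proofs of all theorems in that section. There is therefore no ``paper's own proof'' to compare against.

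Your argument is correct and is the standard one: extend the affine map $T\mapsto(\Phi(T))(F)$ from the base $\mathscr{S}(H)$ of the positive cone to a positive linear functional on $\calL^1(H)_+$, then to all of $\calL^1(H)$ via the Jordan decomposition and real/imaginary parts, invoke the trace duality $(\calL^1(H))^\star\cong\calL(H)$ to produce $E_F$, and read off the effect property and the POVM axioms by testing against rank-one projections. The only step requiring care, as you rightly flag, is the well-definedness and additivity of the extension on the cone, and you handle this correctly using the affineness of $f$ on convex combinations.
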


This theorem establishes the operational role of POVMs as `unsharp observables': in that it transforms normal quantum states
to classical states.

If $J$ is an isometry from $H$ into another complex Hilbert space $\wt H$ and
$F\mapsto \wt P_F \in \calP(\wt H)$ is a projection-valued measure in $\wt H$, then
$$F\mapsto J^\star \wt P_F J \in \calE(H)$$ is a POVM in $H$.
A celebrated theorem of Naimark asserts that the converse is also true:

\begin{theorem}\label{thm:Naimark} Let $(\Om,\calF)$ be a measurable space and let
$E: \calF\to \calE(H)$ be a POVM. There exists a Hilbert space $\wt H$,
a projection-valued measure $\wt P: \calF\to \calP(\wt H)$, and an isometry $J:H\to \wt H$ such that for all $F\in \calF$ we have
$$ E_F = J^\star \wt P_F J.$$
\end{theorem}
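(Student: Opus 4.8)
The plan is to construct $\wt H$ by a Kolmogorov/GNS-type procedure directly from the POVM $E$, mimicking the minimal dilation of a positive-definite kernel. Let $\mathcal{V}$ be the complex vector space of $H$-valued $\calF$-simple functions on $\Om$, a typical element being $f=\sum_{i=1}^{n}\one_{F_i}x_i$ with $F_i\in\calF$ and $x_i\in H$; on $\mathcal{V}$ introduce the sesquilinear form
$$ \Psi\Big(\sum_{i}\one_{F_i}x_i,\ \sum_{j}\one_{G_j}y_j\Big):=\sum_{i,j}\iprod{E_{F_i\cap G_j}x_i}{y_j}. $$
The first — and, I expect, principal — task is to show that $\Psi$ depends only on the simple functions, not on their chosen representations, and that it is positive semidefinite. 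Both follow from one device: given finitely many sets in $\calF$, pass to the finite partition $\Om=\bigsqcup_k A_k$ into the atoms of the algebra they generate. Finite additivity of $F\mapsto E_F$ (immediate from \ref{it:POVM2}, since $E_\emptyset=0$ and a bounded selfadjoint operator is determined by its quadratic form) turns the formula for $\Psi$ into $\Psi(f,g)=\sum_k\iprod{E_{A_k}f_k}{g_k}$, where $f_k,g_k\in H$ are the representation-independent values of $f,g$ on $A_k$; refining the partition leaves this sum unchanged, which settles well-definedness, and putting $g=f$ gives $\Psi(f,f)=\sum_k\iprod{E_{A_k}f_k}{f_k}\ge 0$ because each $E_{A_k}$ is positive. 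Let $\mathcal{N}=\{f\in\mathcal{V}:\Psi(f,f)=0\}$, equip $\mathcal{V}/\mathcal{N}$ with the inner product induced by $\Psi$, and let $\wt H$ be its completion; write $[f]$ for the class of $f\in\mathcal{V}$ and $\iprod{\cdot}{\cdot}_{\wt H}$ for the inner product of $\wt H$.

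Next I would define, for $F\in\calF$, the operator $\wt P_F$ on $\mathcal{V}$ as pointwise multiplication by $\one_F$, i.e.\ $\wt P_F\big(\sum_i\one_{F_i}x_i\big)=\sum_i\one_{F\cap F_i}x_i$. From the partition description of $\Psi$ one reads off $\Psi(\wt P_F f,\wt P_F f)\le\Psi(f,f)$ (only the atoms contained in $F$ contribute on the left), so $\wt P_F$ maps $\mathcal{N}$ into $\mathcal{N}$, descends to a contraction on $\mathcal{V}/\mathcal{N}$, and extends to a contraction on $\wt H$. The identities $\one_F^2=\one_F$ and $\one_F\one_G=\one_{F\cap G}$, together with the symmetry $\Psi(\wt P_F f,g)=\Psi(f,\wt P_F g)$ (again visible in the partition formula), show that each $\wt P_F$ is an orthogonal projection, that $\wt P_F\wt P_G=\wt P_{F\cap G}$, and that $\wt P_\Om=I$ (since $\one_\Om$ acts as the identity). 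Finally, for $\xi=[f]$ with $f=\sum_i\one_{F_i}x_i$ we get $\iprod{\wt P_F\xi}{\xi}_{\wt H}=\sum_{i,j}\iprod{E_{F\cap F_i\cap F_j}x_i}{x_j}$, a finite linear combination of (restrictions of) the complex measures $F\mapsto\iprod{E_F z}{z'}$ arising from \ref{it:POVM2} by polarisation, hence itself a finite positive measure in $F$; for a general $\xi\in\wt H$ one approximates $\xi$ by elements of $\mathcal{V}/\mathcal{N}$ and uses $0\le\wt P_F\le I$ to pass the measure property through the resulting uniform limit. Thus $\wt P$ satisfies \ref{it:POVM1} and \ref{it:POVM2} and is multiplicative, i.e.\ it is a PVM on $(\Om,\calF)$.

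It remains to exhibit the isometry. Define $J\colon H\to\wt H$ by $Jx:=[\one_\Om x]$. Then $\iprod{Jx}{Jy}_{\wt H}=\iprod{E_\Om x}{y}=\iprod{x}{y}$, so $J$ is an isometry, and for all $F\in\calF$ and $x,y\in H$,
$$ \iprod{J^\star\wt P_F Jx}{y}=\iprod{\wt P_F Jx}{Jy}_{\wt H}=\iprod{E_{F\cap\Om}x}{y}=\iprod{E_F x}{y}, $$
which gives $E_F=J^\star\wt P_F J$, as required. The only genuinely delicate step in this programme is the first one — the well-definedness and positivity of $\Psi$ — which hinges entirely on organising arbitrary simple functions over a common atomic refinement and invoking finite additivity of $E$; everything afterwards is routine, the one mild technicality being the density-and-uniform-boundedness argument promoting countable additivity of $\iprod{\wt P_\cdot\,\xi}{\xi}_{\wt H}$ from the dense subspace $\mathcal{V}/\mathcal{N}$ to all of $\wt H$.
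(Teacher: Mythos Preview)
Your construction is the standard Kolmogorov/GNS dilation and is carried out correctly; the only points requiring any care --- well-definedness and positivity of $\Psi$, and the passage of countable additivity of $F\mapsto\iprod{\wt P_F\xi}{\xi}_{\wt H}$ from the dense subspace $\mathcal{V}/\mathcal{N}$ to all of $\wt H$ by uniform (total-variation) approximation --- you have identified and handled adequately.

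There is nothing to compare against: the paper does not prove Theorem~\ref{thm:Naimark}. It is stated in the preliminaries as a classical result, with the remark that complete proofs of all theorems in that section can be found in the cited reference~\cite{Nee} (and implicitly in the other listed sources such as \cite{Berberian, BGL, Holevo}). Your argument is exactly the kind of proof one finds in those references.
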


For any POVM $E:\calF\to \calL(H)$ there exists a unique linear mapping $\Psi:B_{\rm b}(\Om)\to \calL(H)$
satisfying
$$ \Psi(\one_F) = Q_F, \quad F\in \calF\!,$$
  and
$$ \n \Psi(f)\n \le \n f\n_\infty, \quad f\in B_{\rm b}(\Om),$$
where $B_{\rm b}(\Om)$ is the Banach space of all bounded measurable functions on $\Om$ endowed with the supremum norm
$\n f\n_\infty = \sup_{\om\in\Om}|f(\om)|$.
This mapping $\Psi$ satisfies $$\Psi(f)^\star = \Psi(\ov f), \quad f\in B_{\rm b}(\Om).$$
In contrast to the corresponding functional calculus for PVMs, unless the POVM is a PVM, this calculus fails to be multiplicative. Nevertheless, in the same way as bounded normal (resp. selfadjoint, unitary) operators are in one-to-one correspondence with PVMs on $\C$ (resp. on $\R$, $\mathbb{T}$),  contractions are in one-to-one correspondence with POVMs on the unit circle $\mathbb{T}$ and admits the following representation:

\begin{theorem}\label{thm:contr-POVM} For every contraction $T\in\calL(H)$, there exists a unique POVM $E$ on the Borel
$\sigma$-algebra $\calB(\mathbb{T})$
 such that $$ T^n = \int_{-\pi}^\pi e^{in\theta} \ud E(\theta), \quad n\in\N.$$
If $E$ is a POVM with the above property, then $T$ is unitary if and only if $E$ is a projection-valued measure.
\end{theorem}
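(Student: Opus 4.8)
The plan is to deduce existence from the Sz.-Nagy unitary dilation theorem together with the spectral theorem, and to deduce uniqueness --- and the characterisation of the unitary case --- from the POVM functional calculus $\Psi$ recalled above. The essential point is that $T$ is in general not normal, so one cannot hope to produce $E$ directly from a spectral decomposition of $T$; passing to a dilation is unavoidable.

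For existence, I would take a unitary dilation $U$ of the contraction $T$, acting on a Hilbert space $\wt H\supseteq H$, so that with $J:H\embed\wt H$ the inclusion isometry (hence $J\s$ the orthogonal projection of $\wt H$ onto $H$) one has $T^n=J\s U^n J$ for all $n\in\N$. As $U$ is unitary, the spectral theorem furnishes a projection-valued measure $\wt P$ on $\calB(\mathbb T)$ with $U^n=\int_{-\pi}^\pi e^{in\theta}\ud\wt P(\theta)$, $n\in\Z$. By the observation preceding Theorem \ref{thm:Naimark}, $E_F:=J\s\wt P_F J$ is then a POVM on $\calB(\mathbb T)$, and compressing the moments of $U$ to $H$ gives, for all $x,y\in H$ and $n\in\N$,
\[
\int_{-\pi}^\pi e^{in\theta}\ud\iprod{E_\bullet x}{y}
= \int_{-\pi}^\pi e^{in\theta}\ud\iprod{\wt P_\bullet Jx}{Jy}
= \iprod{U^n Jx}{Jy}
= \iprod{J\s U^n Jx}{y}
= \iprod{T^n x}{y},
\]
so that $\int_{-\pi}^\pi e^{in\theta}\ud E(\theta)=T^n$ for all $n\in\N$.

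For uniqueness, suppose $E$ and $E'$ both have the stated property and let $\Psi,\Psi'$ be their functional calculi. Writing $z(\theta)=e^{i\theta}$, we have $\Psi(z^n)=T^n=\Psi'(z^n)$ for $n\in\N$, and applying $\Psi(f)\s=\Psi(\ov f)$ also gives $\Psi(\ov z^{\,n})=(T^n)\s=\Psi'(\ov z^{\,n})$; hence $\Psi$ and $\Psi'$ agree on all trigonometric polynomials. Since $\n\Psi(f)\n\le\n f\n_\infty$ and $\n\Psi'(f)\n\le\n f\n_\infty$, and the trigonometric polynomials are dense in $C(\mathbb T)$ by Stone--Weierstrass, it follows that $\Psi=\Psi'$ on $C(\mathbb T)$. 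For fixed $x\in H$ the identity $\iprod{\Psi(\one_F)x}{x}=E_x(F)$ extends by linearity to simple functions and then, by uniform approximation, the contractivity of $\Psi$, and the finiteness of $E_x$, to $\int_{\mathbb T}f\ud E_x=\iprod{\Psi(f)x}{x}$ for every bounded Borel $f$; likewise for $E'$. Hence $\int_{\mathbb T}f\ud E_x=\int_{\mathbb T}f\ud E'_x$ for all $f\in C(\mathbb T)$, so $E_x=E'_x$ by the uniqueness part of the Riesz representation theorem, and polarisation yields $E_F=E'_F$ for every $F\in\calB(\mathbb T)$.

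For the unitary case: if $E$ is a projection-valued measure, then $\Psi$ is a unital $*$-homomorphism, so $TT\s=\Psi(z)\Psi(\ov z)=\Psi(z\ov z)=\Psi(\one)=I$ and, symmetrically, $T\s T=I$, i.e.\ $T$ is unitary. Conversely, if $T$ is unitary, the spectral theorem supplies a projection-valued measure on $\calB(\mathbb T)$ whose moments reproduce $T^n$ for all $n\in\Z$, in particular for $n\in\N$; by the uniqueness just established this measure coincides with $E$, so $E$ is projection-valued. The one non-elementary ingredient is the unitary dilation used for existence; the rest is bookkeeping with $\Psi$. The step to watch is that $\Psi$ is \emph{not} multiplicative, so uniqueness must be obtained on $C(\mathbb T)$ (via density of trigonometric polynomials and contractivity) rather than from a homomorphism property --- which is also the reason the dilation route, rather than a hands-on construction of $E$, is the natural way to establish existence.
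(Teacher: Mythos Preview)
The paper does not supply its own proof of this theorem; it states the result in the preliminaries and refers to \cite{Nee} for complete proofs. So there is nothing in the paper to compare against directly.

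That said, your argument is correct and is exactly the standard route one finds in the literature: existence via the Sz.-Nagy unitary dilation combined with the spectral theorem for the dilated unitary (this is precisely the Naimark-type compression mentioned just before Theorem~\ref{thm:Naimark}), and uniqueness via density of trigonometric polynomials in $C(\mathbb{T})$ together with contractivity of the functional calculus and the Riesz representation theorem. The treatment of the unitary case is also the expected one. One minor remark: the identity $\iprod{\Psi(f)x}{x}=\int_{\mathbb{T}} f\ud E_x$ holds for all $f\in B_{\rm b}(\mathbb{T})$ directly from the construction of $\Psi$ recalled in the paper, so you need not rederive it; the real work in the uniqueness step is only the passage from trigonometric polynomials to $C(\mathbb{T})$ and then the invocation of Riesz.
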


Intriguingly, the notion of POVM admits a
category-theoretic counterpart, in the sense that a Naimark-type theorem holds and
the usual notion of POVM is recovered in the category of Hilbert spaces
and bounded linear operators \cite{CoePaq}.

\subsection{The Tomita--Takesaki flow}

We can only give the briefest of descriptions here; for a full account the reader is referred to \cite{BraRob, Haa, Takesaki}, where the proofs of all statements can be found.

Let $H$ be a separable complex Hilbert space (in applications it usually is the representation Hilbert space arising from a GNS-like construction), let $\calM\subseteq\calL(H)$ be a von Neumann algebra, and let $\Om\in H$ be a {\em separating} and {\em cyclic} element for $\calM$. Recall that this means that whenever $A\in\calM$ satisfies $A\Om = 0$ it follows that $A=0$, and that the subspace $\calM\Om = \{A\Om :\ A\in\calM\}$ is dense in $H$.
The antilinear operator
$$ S_0: A\Om \mapsto A^\star \Om,$$
is well defined as a densely defined operator in $H$ with domain $\Dom(S_0) = \calM \Om$, and this operator is closable. Its closure will be denoted by $S$. By polar decomposition, there exist a unique antiunitary operator $J$ on $H$ and a unique positive selfadjoint operator $\Delta$ in $H$ such that
$$ S = J\Delta^{1/2}$$
with equal domains.
The operator $J$ and $\Delta$ are called the {\em modular conjugation} and the {\em modular} associated with the triple $(\calM,\tau,\Omega)$.
These operators have a number of interesting properties that are of no concern here. Important for us is that $\Delta$ induces
a one-parameter family of automorphisms of $\calM$ by
$$ \sigma_t \colon A \mapsto \Delta^{-it}A\Delta^{it}, \quad t\in\R,$$ the so-called
{\em modular flow associated with $\Om$}. The non-trivial point is that the right-hand side belongs to $\calM$ for all $A\in\calM$. Note that, following \cite{ConRov}, on the right-hand side we have changed the usual sign convention to enhance compatibility with the physical applications.

Let us now suppose that $\om$ is a {\em state} on a $C^\star$-algebra $\calA$ with unit $I$, that is,
$\om:\calA\to\C$ is a bounded linear functional on $\calA$ with the following two properties:
$$ \om(I) = 1 , \qquad \om(A^\star A)\ge 0 \ \ \hbox{for all} \ \ A\in\calA.$$
There is some redundancy in this definition, in that boundedness is a consequence of the other assumptions.
It is an easy consequence of the Cauchy--Schwarz inequality \cite[Lemma 2.3.10]{BraRob} that the set $\mathscr{N}_\om := \{A\in \calA \colon \om(A^\star A) = 0\}$ is a left ideal of $\calA$. On the quotient space $\calA/\mathscr{N}_\om$,
$\om$ induces an inner product $$\iprod{q_\om(A)}{q_\om(B)}_\om := \om(B^\star A),$$ where $q_\om: \calA \to \calA/\mathscr{N}_\om$ is the quotient mapping. Let $H_\om$ be the Hilbert space obtained by completing the quotient with respect to it,
and define $\pi_\om : \calA \to \calL(H_\om)$ by
$$ \pi_\om(A) q_\om(B) := q_\om(AB), \quad B\in \calL(H_\om).$$
Then $\pi_\om$ extends to a contractive $\star$-homomorphism from $\calA$ to $H_\om$, and this mapping is an isometry if and only if the state $\om$ is {\em faithful}, that is, $\om(A^\star A) =0$ implies $A=0$, or equivalently, if $\mathscr{N}_\om = \{0\}$.
The element $$\Om:= q_\om(I)$$ is cyclic for the range of $\pi_\om$; in particular, $\Om$ is cyclic for the von Neumann algebra $\calM_\om$ generated by the range of $\pi_\om$ in $\calL(H_\om)$. Note that
for all $A\in\calA$ we have $$ \om(A) = \iprod{\pi_\om(A)\Om}{\Om}_\om.$$
The pair $(H_\om,\pi_\om)$ is known as the {\em Gelfand--Naimark--Segal} (GNS) {\em representation} associated with $\om$.
If, in addition to being cyclic, $\Omega$ is also separating for $\calM_\om$, this places us in the setting of the Tomita--Takesaki theory, and the associated
modular flow on $\calM_\om$ will be called the {\em modular flow associated with the state $\om$}.

The connection with equilibrium thermodynamics is given by a theorem (see \cite[page 218]{Haa})
that states that if $\Om$ is separating for $\calM_\om$, the state $\om$ is a KMS state with respect to the modular flow associated with $\om$.
The precise definition of KMS states is unimportant for our purposes (for an exhaustive treatment see \cite{BraRob2}). What matters is that this notion extends the notion of a Gibbs state, in that every Gibbs state in KMS; and the notions are equivalent under suitable finiteness conditions. Recall that
if $\mathscr{H}$ is a Hamiltonian operator with the property that the bounded operator $e^{-\beta \mathscr{H}}$ is of trace class, then the
{\em Gibbs state at inverse temperature $\beta>0$ associated with $\mathscr{H}$} is the state $\om$ on $\calL(H^2(\mathbb{D}))$
given by $$\om(A) = \tr(T_\beta A),$$
where
$T_\beta:= {e^{-\beta \mathscr{H}}}/{\tr(e^{-\beta \mathscr{H}})}.$

\subsection{Noncommutative $L^2$-spaces}

The noncommutative spaces $L^2(\calM,\tau)$, where $\calM$ is a von Neumann algebra, are usually introduced for faithful semifinite normal traces $\tau$. Since in Section \ref{sec:NC} we will be dealing with general semifinite traces $\tau$, we will briefly go through the steps of defining $L^2(\calM,\tau)$ in that more general setting.

Let $\calM\subseteq\calL(H)$ be a von Neumann algebra over a given abstract separable Hilbert space $H$, and
let $\tau$ be a trace $\tau$ on $\calL(H)$, that is, $\tau$ is a mapping from the positive cone in $\calM$ to $[0,\infty]$ such that
the following conditions are satisfied:
\begin{enumerate}[\rm(i)]
 \item $\tau(cA) =c\tau(A)$ for all  $0\le A\in \calM$ and scalars $c\ge 0$;
 \item $\tau(A+B) =\tau(A)+\tau(B)$ for all $0\le A\in \calM$ and $0\le B\in \calM$;
 \item $\tau(A^\star A) = \tau(AA^\star)$ for all $A\in \calM$.
\end{enumerate}

Let
\begin{align*}
 \calM_\tau^{1,+} & := \{A\in\calM:\, A\ge 0, \ \tau(A) < \infty\}, \\
 \calM_\tau^2 & := \{A\in\calM:\, \tau(A^\star A) < \infty\}.
\end{align*}
The linear span  $\calM_\tau^{1}$ of $\calM_\tau^{1,+}$ is a $\star$-subalgebra of $\calM$ and
the restriction of $\tau$ to $\calM_\tau^{1,+}$ uniquely extends to a
positive linear functional on $\calM_\tau^{1}$.
Moreover, $\calM_\tau^2$ is a left ideal in $\calM$, and for any two $A,B\in \calM_\tau^2$ one has $B^\star A \in \calM_\tau^{1}$ and the {\em Cauchy--Schwarz inequality} holds:
$$ |\tau(B^\star A)|^2 \le \tau(A^\star A)\tau(B^\star B)$$
(see \cite[Lemma 5.1.2]{Ped}, which applies here since condition (iii) implies that
$\tr(UAU^\star) = \tr(A)$ if $U$ is unitary).
By polarisation (see the proof \cite[Proposition 5.2.2]{Ped}), for all  $A,B\in \calM_\tau^2$ we have
\begin{align*} \tau(AB) = \tau(BA).
\end{align*}

By the Cauchy--Schwarz inequality, the set
$$ \mathscr{N}_{\tau}:= \{A\in\calM_\tau^2: \ \tau(A^\star A) = 0\}$$
is a vector space. As in \cite[Lemma 5.1.2]{Ped}),
$ \mathscr{N}_{\tau}$ is a left ideal
in $\calM_\tau^2$. On the quotient space $\calM_\tau^2/\mathscr{N}_{\tau}$ we obtain a well defined inner product by putting
$$ \iprod{q_\tau(A)}{q_\tau(B)}_\tau := \tau(B^\star A), \quad A,B\in \calM_\tau^2,$$
where $q_\tau: \calM_\tau^2\to \calM_\tau^2/\mathscr{N}_{\tau}$ is the quotient mapping.
The Hilbert space  $H_{\tau}$ is defined as the completion of $\calM_\tau^2/\mathscr{N}_{\tau}$ with respect to this induced inner product.

\section{Thermal time for the quantum harmonic oscillator}\label{sec:number-phase}

As is well known, the (one-dimensional) quantum harmonic harmonic oscillator is described by the Hamiltonian $L = \frac12I+N$, where $N$ is the number operator and $\frac12$ is the ground state energy. In view of the trivial identity $e^{itL}Ae^{-itL} = e^{itN}Ae^{-itN}$, from the point of view of finding covariant POVMs it makes
no difference whether we work with $L$ or $N$. Since the number operator $N$ admits a natural presentation on the space $H^2(\mathbb{D})$ (which, incidentally, also suggests a natural approach to the problem of time for the free relativistic particle), we will work with $N$ instead of $L$.

The problem of finding a POVM covariant with the unitary group $(e^{itN})_{t\in\R}$ has already been solved in the physics literature \cite{BGL} (using ideas that go back to \cite{GarWong}), where this POVM is interpreted as describing `phase' as a (circle-valued) `unsharp observable'. As was already observed in this paper, since $L$ can be interpreted as a Hamiltonian this POVM can alternatively be interpreted as describing `(oscillator clock) time'. The exposition presented here follows the mathematically rigorous treatment of \cite[Chapter 15]{Nee}.

\paragraph{Thermal time as a POVM in $H^2(\mathbb{D})$}

Let $\mathbb{D}$ denote the open unit disc $\{z\in\C:\, |z|<1\}$ in the complex plane.
The space $H^2(\mathbb{D})$ is defined as the Hilbert space whose elements are the holomorphic functions $f:\mathbb{D}\to \C$ whose power series $f(z) = \sum_{n\in\N} c_n z^n$ satisfies
$$\n f\n_{H^2(\mathbb{D})}^2 = \sum_{n\in\N} |c_n|^2.$$ We consider the selfadjoint operator $N$
with domain
$$\Dom(N) = \Bigl\{f = \sum_{n\in\N} c_n z_n\in H^2(\mathbb{D}):\, \sum_{n\in\N} n^2|c_n|^2 <\infty\Bigr\},$$
given by
$$ N z_n =nz_n, \quad n\in\N,$$
where $z_n(z) := z^n$ as a function on $\D$.
Then $\sigma(N) = \N$ and the functions $z_n$ form an orthonormal basis of eigenvectors for $N$.

To define {\em phase}\index{phase} as a POVM on $\mathbb{T}$ we proceed as follows.
For Borel sets $B\subseteq \mathbb{T}$ we define the operator $E_B$ on $H^2(\mathbb{D})$
by $$ E_B = J^\star\circ P_B \circ J,$$
where $J:H^2(\mathbb{D}) \to L^2(\mathbb{T})$ is the isometry given by $$J: \sum_{n\in\N}c_n z_n\mapsto \sum_{n\in\N}c_n e_n,$$
$J^\star: L^2(\mathbb{T})\to H^2(\mathbb{D})$ is its adjoint, and $P_B f:=\one_B f$ for $f\in L^2(\mathbb{T})$.
It is clear that $0\le E_B\le I$, and accordingly the assignment $E: B\mapsto E_B$ defines a POVM on $\mathbb{T}$.
This construction goes back to \cite{GarWong}, where a selfadjoint operator $F$ on $H^2(\mathbb{D})$ is constructed
satisfying the Heisenberg commutation relation $$NF-FN = -iI$$ on a suitable dense domain of function in $H^2(\mathbb{D})$.
This operator is defined as the Toeplitz operator with symbol $\arg(z)$.
The problem with this construction, however, is that there the argument is not uniquely defined; one obtained different operators for different choices. As observed in \cite{Nee-Phase} the Heisenberg relation extends to the natural maximal domain of the commutator $FN- NF  = -iI$, but this domain fails to be dense with respect to the graph norm of $N$.
Moreover, as already observed in \cite{GarWong}, the (stronger) Weyl commutation relations
$$ e^{isN}e^{itF} = e^{-ist}e^{itF}e^{isN}$$
do not hold.
The POVM constructed above solves these problems in a satisfactory way,
in that it enjoys the following covariance property with respect to (cf. \cite[Section III.5]{BGL}):

\begin{theorem}\label{thm:quantum-phase}
The phase observable $E$ is {\em covariant} with respect to the unitary $C_0$-group generated by $iN$, that is,
for all Borel subsets $B\subseteq\mathbb{T}$ we have
\begin{align*}
e^{-itN}E_B e^{itN} = E_{e^{it}B}, \quad t\in \R,
\end{align*}
where $e^{it}B = \{e^{it}z: \, z\in B\}$ is the rotation of $B$ over $t$.
\end{theorem}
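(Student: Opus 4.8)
The plan is to reduce the covariance identity to a computation on the orthonormal basis $(z_n)_{n\in\N}$ of $H^2(\mathbb D)$ and the corresponding basis $(e_n)_{n\in\Z}$ of $L^2(\mathbb T)$, using the explicit formula $E_B = J^\star P_B J$. First I would unwind what the three operators do to basis vectors: $J z_n = e_n$ for $n\in\N$, the multiplication operator $P_B$ acts diagonally in the Fourier picture only after one realises $P_B e_n = \one_B e_n$ is \emph{not} diagonal (it mixes frequencies), so instead I would keep $P_B$ as multiplication by $\one_B$ on $L^2(\mathbb T)$ and exploit that multiplication operators intertwine nicely with rotations. The key observation is the intertwining relation between $J$ and the two unitary groups: on $H^2(\mathbb D)$ the group $e^{itN}$ acts by $e^{itN} z_n = e^{int} z_n$, while on $L^2(\mathbb T)$ the rotation group $R_t f(z) := f(e^{-it}z)$ satisfies $R_t e_n = e^{-int} e_n$ for \emph{all} $n\in\Z$. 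Hence $J e^{itN} = R_{-t} J$, or equivalently $J e^{-itN} = R_t J$, since both sides agree on the basis $(z_n)_{n\in\N}$ and are bounded. Taking adjoints gives $e^{itN} J^\star = J^\star R_{-t}$ and $e^{-itN}J^\star = J^\star R_t$.

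With these intertwining relations in hand, the computation is short:
\begin{align*}
e^{-itN} E_B e^{itN}
&= e^{-itN} J^\star P_B J e^{itN}
= J^\star R_t P_B R_t^{-1} J.
\end{align*}
So everything reduces to the classical fact that conjugating the multiplication operator $P_B = M_{\one_B}$ by the rotation $R_t$ gives $R_t M_{\one_B} R_t^{-1} = M_{\one_B \circ \rho_{-t}} = M_{\one_{e^{it}B}} = P_{e^{it}B}$, where $\rho_s(z) = e^{-is}z$; indeed $(R_t M_{\one_B} R_t^{-1} f)(z) = \one_B(e^{-it}z) f(z)$ and $e^{-it}z \in B \iff z \in e^{it}B$. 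Substituting back yields $e^{-itN}E_B e^{itN} = J^\star P_{e^{it}B} J = E_{e^{it}B}$, which is exactly the claimed identity.

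The main thing to be careful about — the only real obstacle, and it is a mild one — is the direction of the group action and the sign of the rotation, so that the $e^{it}B$ on the right-hand side comes out correctly rather than $e^{-it}B$; this hinges on getting $J e^{-itN} = R_t J$ rather than $J e^{itN} = R_t J$, which is pinned down by the eigenvalue bookkeeping $e^{itN}z_n = e^{int}z_n$ versus $R_t e_n = e^{-int}e_n$. A secondary point worth a sentence is that $J$ is only an isometry, not unitary (its range is the Hardy subspace of $L^2(\mathbb T)$ spanned by nonnegative frequencies), so one cannot simply cancel $J J^\star$; but this never gets used — the argument only ever moves the $R_t$'s past $J$ and $J^\star$ via the intertwining relations, and the noninvertibility of $J$ is irrelevant. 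One could also phrase the whole proof at the level of the generating measures $E_x$, $x\in H^2(\mathbb D)$, using that a POVM is determined by the scalar measures $F\mapsto \iprod{E_F x}{x}$, but the operator-level computation above is cleaner and self-contained.
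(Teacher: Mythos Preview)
Your proof is correct. The intertwining $Je^{-itN}=R_tJ$ and its adjoint are checked cleanly on the basis, the conjugation $R_tP_BR_t^{-1}=P_{e^{it}B}$ is the standard multiplication-by-rotated-indicator computation, and the sign bookkeeping is right; your remark that only the intertwining (not invertibility of $J$) is used is also to the point.

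As for comparison: the paper does not actually supply a proof of this theorem. It states the result with a reference to \cite[Section III.5]{BGL} and, in the following paragraph, points to \cite{Nee, Nee-Phase} for proofs and further discussion. Your argument is precisely the natural one underlying those references: reduce to the Naimark dilation $E_B=J^\star P_BJ$ and use that the dilated PVM, being pointwise multiplication on $L^2(\mathbb T)$, is covariant under rotations which restrict (via $J$) to $e^{itN}$. So there is nothing to contrast---you have written out the proof the paper delegates to the literature.
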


The relationship between the POVM $E$ and the operator $F$ is expressed by the relation
$$ F = \int_{\mathbb{T}}\arg(z) \ud E(z).$$
We refer the reader to \cite{Nee, Nee-Phase} for a proof and a fuller discussion of these matters.

Reinterpreted as `clock time' (cf. \cite{GarWong}), we show next that the POVM $E$ constructed in the previous subsection can be interpreted as a special instance of the Connes--Rovelli thermal time associated with the Gibbs state
$$ \om(A):= \tr (AT), \quad A\in \calL(H^2(\mathbb{D})),$$
where $$ T = e^{-\beta N}/\tr\,(e^{-\beta N}).$$
In this formula, the scalar $\beta$ is interpreted as the inverse temperature.
This involves showing that $E$ transforms covariantly with the modular flow associated with $\om$ by the Tomita--Takesaki theory, which is the content of Theorem \ref{thm:thermal-L} below. Apart from its own interest, the proof will serve as a template for our definition of modular time in the next section.

Following \cite{HBW} (see also \cite[Chapter V]{Haa}), rather than using the GNS representation associated with $\om$, we may use the unitarily equivalent representation $\pi$ of $ \calL(H^2(\mathbb{D}))$
on the Hilbert space $\HS$ of all Hilbert-Schmidt operators on $H^2(\mathbb{D})$ given by
$$ \pi(A)B:= AB, \quad A\in \calL(H^2(\mathbb{D})), \ B \in \HS.$$
In what follows we will write $A$ for $\pi(A)$.

Since $T$ is of trace class,  $\Omega:= T^{1/2}$ is Hilbert-Schmidt and therefore defines an element of $\HS$, and it is immediate that
$$ \om(A) = \iprod{A\Om}{\Om},\quad A\in \calL(H^2(\mathbb{D})),$$
where $\iprod{A}{B} = \tr\, (B^\star A)$ is the inner product of $\HS$.
As consequence of the unitary equivalence of this representation with the GNS representation, $\Om$ is
cyclic for $\pi$ (and hence for $\calM$, the von Neumann algebra $\calM$ over
$\HS$ generated by $\calL(H^2(\mathbb{D}))$, viewing operators in $\calL(H^2(\mathbb{D}))$ as operators acting on $\calL(\HS)$ by left multiplication) and separating for $\calM$. Moreover,
by \cite[Theorem 2.4.24]{BraRob}, $\calM = \calL(\HS)$.

On $\HS$ we consider the unitary operators $U(t)$, $t\in\R$, defined by
$$ U(t)S:= e^{i\beta tN}Se^{-i\beta tN}, \quad S\in \HS.$$
The family $(U(t))_{t\in\R}$ is easily seen to be a unitary $C_0$-group on $\HS$.
By Stone's theorem,
the generator of this group is of the from $i\mathscr{H}$ with $\mathscr{H}$ selfadjoint.
By the functional calculus of unbounded selfadjoint operators, the operator $\Delta:= e^{-\mathscr{H}}$ is
injective and selfadjoint (its domain being given by this calculus).
By the composition rule of the functional calculus, for $t\in \R$ we obtain
$$ \Delta^{-it} = (e^{-\mathscr{H}})^{-it} = e^{it \mathscr{H}} = U(t),$$
where in the last step we used the fact that the unitary group generated by $i\mathscr{H}$ also arises through exponentiation in the functional calculus. In particular, for all $A,B\in\calL(H^2(\mathbb{D}))$,
$$ \iprod{\Delta^{-it}A\Om }{B\Om}
= \om(B^\star e^{i\beta tN} Ae^{-i\beta tN}).$$
By spectral theory, for $h\in \Dom(\Delta^{1/2})$
the mapping $z\mapsto \Delta^{-iz}h$
is continuous on the closed strip $ \{\Im z \in [0,\frac12]\}$
and holomorphic on its interior. This, and standard properties of fractional powers, implies
that for all $A,B\in\calL(H^2(\mathbb{D}))$, both sides of the identity
$$ \iprod{\Delta^{-it}A\Om}{B\Om} = \om(B^\star e^{i\beta tN}Ae^{-i\beta tN}) =\tr\, (B^\star e^{i\beta tN}Ae^{-\beta(it+1)N}) $$
admit a continuous extension to the strip $\{\Im z \in [0,\beta]\}$ which is analytic on the interior,
and given by substituting $z$ for $t$.
By the edge of the wedge theorem (cf. \cite[Proposition 5.3.6]{BraRob2}) these extensions agree. It follows that for $\Re z\in [0,\frac12]$ we have $A\Om\in\Dom(\Delta^{1/2}) \subseteq\Dom(\Delta^{-iz})$,
and taking $z = \frac12i$ results in the identity
\begin{align*}
\iprod{\Delta^{1/2} A\Om}{B\Om} = \iprod{e^{-\frac12\beta \mathscr{H} } A\Om}{B\Om}
& = \tr\, (B^\star e^{-\frac12\beta N}Ae^{-\frac12\beta N})\\
&= \iprod{\Om A}{B\Om}.
\end{align*}
Since $\Om$ is cyclic for $\pi$ in $\HS$, this implies
$$ \Delta^{1/2} A\Om = \Om A.$$

Denoting by $J$ the antiunitary operator that sends an operator $B\in \HS$ to its adjoint $B^\star\in\HS$,
it follows that
$$ (J\Delta^{1/2}) (A\Om) = J(\Om A) = A\s \Om.$$
By uniqueness, this implies that $J$ and $\Delta$ are the modular conjugation and the modular associated with $\om$ in the representation $\pi$, respectively.

Denoting the associated modular flow as before by $ \sigma_t A := \Delta^{-it}A\Delta^{it}$, we have the following result:

\begin{theorem}\label{thm:thermal-L}
For all $t\in\R$ and Borel sets $B\subseteq \mathbb{T}$, the POVM $E$ satisfies $$\sigma_t (E_B)  = E_{e^{i\beta t}B}.$$
\end{theorem}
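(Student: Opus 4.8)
The proof reduces the claim to Theorem~\ref{thm:quantum-phase}. Indeed, the analytical work has already been carried out in the discussion preceding the statement: the modular conjugation $J$ and the modular operator $\Delta$ attached to $\om$ in the representation $\pi$ on $\HS$ have been identified, and in particular $\Delta^{-it} = U(t)$, where $U(t)S = e^{i\beta tN}Se^{-i\beta tN}$ for $S \in \HS$. What is left is to read the definition $\sigma_t(X) = \Delta^{-it}X\Delta^{it}$ against this formula and then to feed in the covariance of the phase POVM $E$. Throughout I keep the running convention whereby $E_B$ stands for $\pi(E_B) \in \calM = \calL(\HS)$, i.e.\ the operator of left multiplication by $E_B$ on the Hilbert--Schmidt class.

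First I would compute $\sigma_t(\pi(E_B)) = \Delta^{-it}\pi(E_B)\Delta^{it} = U(t)\,\pi(E_B)\,U(-t)$ by evaluating on an arbitrary $S \in \HS$. Using $U(-t)S = e^{-i\beta tN}S e^{i\beta tN}$, then left multiplication by $E_B$, and then $U(t)$, the two rightmost exponential factors cancel one another, leaving
\[
\sigma_t(\pi(E_B))\,S \;=\; \bigl(e^{i\beta tN}E_B e^{-i\beta tN}\bigr)\,S , \qquad S \in \HS,
\]
so that $\sigma_t(\pi(E_B)) = \pi\bigl(e^{i\beta tN}E_B e^{-i\beta tN}\bigr)$ as an identity in $\calL(\HS)$. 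This is nothing but the operator form of the pairing identity $\iprod{\Delta^{-it}A\Om}{B\Om} = \om(B^\star e^{i\beta tN}Ae^{-i\beta tN})$ established above, now read off directly from $\Delta^{-it} = U(t)$ rather than through the cyclic vector $\Om$.

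Next I would invoke Theorem~\ref{thm:quantum-phase}, which gives $e^{-isN}E_B e^{isN} = E_{e^{is}B}$ for all real $s$. Inserting the value of the parameter that corresponds to $\beta t$ rewrites $e^{i\beta tN}E_B e^{-i\beta tN}$ as $E_{e^{i\beta t}B}$; this is the step in which the scale factor $\beta$ enters and in which the (deliberately non-standard) sign convention $\sigma_t(X) = \Delta^{-it}X\Delta^{it}$ flagged in Section~\ref{sec:prelim} must be matched against the direction of rotation in Theorem~\ref{thm:quantum-phase}. Applying $\pi$ to this identity in $\calL(H^2(\mathbb{D}))$ and combining with the previous paragraph yields $\sigma_t(\pi(E_B)) = \pi(E_{e^{i\beta t}B})$, which is exactly the claimed $\sigma_t(E_B) = E_{e^{i\beta t}B}$ in the running shorthand; alternatively one descends from $\HS$ to $H^2(\mathbb{D})$ using that $\pi$ is injective.

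I do not expect a substantive obstacle: once $\Delta^{-it} = U(t)$ is available the statement is essentially a change of variables on top of Theorem~\ref{thm:quantum-phase}. The only point genuinely requiring care is the bookkeeping of the rotation direction and of the factor $\beta$ — reconciling the sign in $\sigma_t$ with the covariance relation — together with the harmless but easily blurred identification of $E_B$ with $\pi(E_B)$ throughout.
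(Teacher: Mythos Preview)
Your proposal is correct and follows essentially the same approach as the paper: both compute $\sigma_t(A) = U(t)\,\pi(A)\,U(-t)$, identify it with left multiplication by $e^{i\beta tN}Ae^{-i\beta tN}$, and then invoke Theorem~\ref{thm:quantum-phase}. The only cosmetic difference is that you evaluate on an arbitrary $S\in\HS$, whereas the paper tests against vectors $B\Om$ and appeals to cyclicity of $\Om$; your route is marginally more direct but not substantively different.
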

\begin{proof}
For all $A,B\in \calL(H^2(\mathbb{D}))$ we have
\begin{align*}
\Delta^{-it} A\Delta^{it}B\Om
   & = e^{it\mathscr{H}}Ae^{-it\mathscr{H}}B\Om
\\ & = U(t)AU(-t)B\Om
\\ & = (e^{\beta i tN}(A(e^{-i \beta tN}Be^{i \beta tN}))e^{-i\beta tN})\Om
\\ & = e^{\beta i tN}Ae^{-i \beta tN}B\Om.
\end{align*}
Since $\Om$ is cyclic in $\HS$, and $\HS$ is dense in the weak operator topology of $\calL(H^2(\mathbb{D}))$, it follows that
$ \sigma_{t}(A)= \Delta^{it} A\Delta^{-it} = e^{\beta i tN}Ae^{-i \beta tN}$.
By Theorem \ref{thm:quantum-phase}, for $A = E_B$ with $B\subseteq \R$ a Borel set this may be rewritten in the form
$$ \sigma_{t}(E_B) = e^{i\beta tN}E_B e^{-i\beta tN} = E_{e^{i\beta t}B}.$$
\end{proof}

In conclusion, the thermal time associated with the Gibbs state $\om$ of $N$
recovers the `preferred time' with respect to which it is Gibbs for the unitary group generated by $iN$.

\section{Modular time}\label{sec:modular-time}

Scrutinising the above proof, we note that the unitary group $(\Delta^{it})_{t\in\R}$ implementing thermal time has two key properties:
\begin{enumerate}
\item It acts on a noncommutative $L^2$ space, namely the space of Hilbert-Schmidt operators $\HS$.
\item It is associated with the cyclic vector $\Om= T^{1/2}$, where $T$ is an injective positive operator in $\calL(H)$, in the sense that
$$
\Delta^{it}A = T^{it} A T^{-it} \quad \forall A \in \calL(H^2(\mathbb{D})).
$$
\end{enumerate}
In the proof, the modularity relation
$$
J\Delta^{1/2}(A\Om) = A^{*}\Om \quad \forall A\in\calL(H^2(\mathbb{D}))
$$
with $J:A \mapsto A^{*}$, is a consequence of the above properties of $(\Delta^{it})_{t\in \R}$.

This motivates the following definition.
Let $H$ be a Hilbert space, and let
$\calM\subseteq \calL(H)$ be a von Neumann algebra.
Recall from Section \ref{subsec:NCL2} that if $\tau$ is a trace on $\calM$, then
the corresponding noncommutative space $L^2(\calM,\tau)$ is denoted by $H_{\tau}$.

\begin{definition}
\label{def:modtime}
Let $\tau$ be a trace on $\calM$ and let $T$ be a (possibly unbounded) positive and selfadjoint operator acting in $H$ admitting bounded imaginary powers.
The pair $(\tau,T)$ is said to define a {\em modular time} if the assignment
$$
t \mapsto [A \mapsto T^{it}AT^{-it}]
$$
defines a unitary $C_0$-group of the form $(\Delta^{it})_{t \in\R}$ on $H_{\tau}$, where $\Delta$ is an injective positive selfadjoint operator on $H_\tau$.
\end{definition}

For the theory of bounded imaginary powers we refer to \cite{HNVW3}.

For operators $A\in \calM$ satisfying $\tau(A^\star A)$ we use the standard abuse of notation of identifying
$A$ with its equivalence class modulo $\mathscr{N}_\tau$, which is defined as the left ideal of all $A\in\calM$ satisfying $\tau(A^\star A) = 0$. Denoting the quotient map by $q$, the assignment $A \mapsto T^{it}AT^{-it}$
in the above definition should more rigorously be interpreted as the assignment
$$q(A) \mapsto q(T^{it}AT^{-it}),$$
initially defined for $A\in\calM$ such that $\tau(A^\star A)<\infty$.
We will omit the mapping $q$ in what follows.

The use of the term ``modular" comes from the following fact.

\begin{lemma}\label{lem:modular}
Let $(\tau,T)$ define a modular time, and let $J$ be the antiunitary operator given by the assignment $A \mapsto A^{*}$ on $H_{\tau}$. If for all $A\in H_{\tau}$ the operator $AT^{1/2}$ is well defined as an element of $H_\tau$, then for all $A\in H_\tau$ it belongs to $\Dom(\Delta^{1/2})$ and
$$J\Delta^{1/2}(AT^{1/2}) = A^{*}T^{1/2}.$$
\end{lemma}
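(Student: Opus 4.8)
The plan is to follow the template provided by the proof of Theorem~\ref{thm:thermal-L}, with the cyclic vector $\Om=T^{1/2}$ there replaced by right multiplication by $T^{1/2}$ on $H_\tau$. Since $J$ is the involution $X\mapsto X^{*}$ on $H_\tau$ and $T^{1/2}$ is positive and selfadjoint, the asserted identity $J\Delta^{1/2}(AT^{1/2})=A^{*}T^{1/2}$ is equivalent to the conjunction
\[
AT^{1/2}\in\Dom(\Delta^{1/2})\qquad\text{and}\qquad\Delta^{1/2}(AT^{1/2})=T^{1/2}A,
\]
and it is this reformulation that I would prove. Note first that, under the standing hypothesis, both $A\mapsto AT^{1/2}$ and $A\mapsto T^{1/2}A=\bigl(A^{*}T^{1/2}\bigr)^{*}$ are everywhere defined on $H_\tau$, hence bounded by the closed graph theorem. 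It thus suffices to prove the two assertions for $A$ in the dense subspace $\calM_\tau^2$ of $H_\tau$; the general case then follows by approximating $A$ in $H_\tau$ by a sequence in $\calM_\tau^2$ and combining this boundedness with the closedness of $\Delta^{1/2}$.

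Fix $A,B\in\calM_\tau^2$. Using the modular-time relation $\Delta^{it}X=T^{it}XT^{-it}$ on $H_\tau$, the commutation $T^{1/2}T^{it}=T^{it}T^{1/2}$, the $*$-invariance of $\calM_\tau^2$ (a consequence of condition (iii) in the definition of a trace), and the tracial identity $\tau(XY)=\tau(YX)$ on $\calM_\tau^1$, one obtains the identity
\[
\iprod{\Delta^{it}(AT^{1/2})}{B}_\tau=\tau\bigl(T^{1/2-it}B^{*}T^{it}A\bigr),\qquad t\in\R,
\]
where, as is customary in Tomita--Takesaki computations, the rearrangements are to be justified by first passing to the bounded truncations $T_n$ of $T$ (with spectrum contained in $[n^{-1},n]$ and $T_n\to T$ strongly) and then letting $n\to\infty$. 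Exactly as in the proof of Theorem~\ref{thm:thermal-L}, ``standard properties of fractional powers'' applied to the spectral calculus of $T$ show that the right-hand side extends to a function $\Phi$ holomorphic on the open strip $\{-\tfrac12<\Im z<0\}$ and continuous on its closure, with $\Phi(-\tfrac i2)=\tau(B^{*}T^{1/2}A)=\iprod{T^{1/2}A}{B}_\tau$; again one argues first for the truncations $T_n$, for which $z\mapsto T_n^{iz}$ is an entire, norm-bounded $\calL(H)$-valued map, and passes to the limit via the Cauchy--Schwarz inequality for $\tau$. The two boundary estimates
\begin{align*}
\bigl|\iprod{\Delta^{it}(AT^{1/2})}{B}_\tau\bigr|&\le\Vert AT^{1/2}\Vert_\tau\,\Vert B\Vert_\tau,\\
\bigl|\tau\bigl(T^{-it}B^{*}T^{1/2}T^{it}A\bigr)\bigr|&\le\Vert A\Vert_\tau\,\Vert T^{1/2}B\Vert_\tau,
\end{align*}
together with the boundedness of multiplication by $T^{1/2}$, yield via the three-lines lemma a bound of the form $C_A\Vert B\Vert_\tau$ throughout the strip, uniformly in $B\in\calM_\tau^2$.

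An edge-of-the-wedge argument (cf.~\cite[Proposition 5.3.6]{BraRob2}), which is precisely the step used in the proof of Theorem~\ref{thm:thermal-L}, now shows that $AT^{1/2}\in\Dom(\Delta^{1/2})$, that the holomorphic extension of $t\mapsto\iprod{\Delta^{it}(AT^{1/2})}{B}_\tau$ is $z\mapsto\iprod{\Delta^{iz}(AT^{1/2})}{B}_\tau$, and hence that the latter agrees at $z=-\tfrac i2$ with $\Phi(-\tfrac i2)$ computed above. Therefore $\iprod{\Delta^{1/2}(AT^{1/2})}{B}_\tau=\iprod{T^{1/2}A}{B}_\tau$ for every $B\in\calM_\tau^2$, and since $\calM_\tau^2$ is dense in $H_\tau$ this yields $\Delta^{1/2}(AT^{1/2})=T^{1/2}A$, first for $A\in\calM_\tau^2$ and then, by the reduction above, for all $A\in H_\tau$. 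Applying $J$ gives $J\Delta^{1/2}(AT^{1/2})=\bigl(T^{1/2}A\bigr)^{*}=A^{*}T^{1/2}$, as desired.

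The step I expect to be the main obstacle is, as in Theorem~\ref{thm:thermal-L}, the analytic-continuation and edge-of-the-wedge argument: one must pass from the a priori merely norm-continuous orbit $t\mapsto\Delta^{it}(AT^{1/2})$ to a holomorphic function on a strip and extract from it the membership $AT^{1/2}\in\Dom(\Delta^{1/2})$. The extra difficulty compared with Theorem~\ref{thm:thermal-L} is that $T$ is now allowed to be unbounded, so the formal substitution $T^{iz}$ off the real axis involves the unbounded operators $T^{\pm v}$; this is what forces the detour through the bounded truncations $T_n$ and the $\tau$-Cauchy--Schwarz estimates controlling the passage to the limit. A secondary, essentially bookkeeping, point is to make precise the hypothesis that $AT^{1/2}$ is well defined as an element of $H_\tau$, which is exactly what guarantees that all the vectors and inner products above genuinely live in $H_\tau$ and, via the closed graph theorem, that left and right multiplication by $T^{1/2}$ define bounded operators on $H_\tau$.
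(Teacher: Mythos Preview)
Your proposal is correct and follows essentially the same approach as the paper: both derive the identity from the defining relation $\Delta^{it}X=T^{it}XT^{-it}$ by analytic continuation in $t$ to $t=-\tfrac{i}{2}$, invoking an edge-of-the-wedge argument to justify the extension and the domain membership $AT^{1/2}\in\Dom(\Delta^{1/2})$. The only cosmetic difference is that the paper tests against vectors $f,g\in\Dom(T^{1/2})\subset H$ whereas you pair against $B\in\calM_\tau^2$ in the $H_\tau$-inner product; your version is more detailed about handling the unbounded $T$ via spectral truncations and the three-lines bound, points the paper leaves implicit.
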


Note that if $T$ is bounded, then the assumption that $AT^{1/2}$ be well defined as an element of $H_\tau$ is satisfied. In the setting of Section \ref{sec:NC} the operator $T$ is unbounded and the condition will be checked by hand.

\begin{proof}
By definition, for all $f,g\in \Dom(T^{1/2})$ we have
\begin{align*}
\iprod{J\Delta^{it}(AT^{1/2})f}{g} = \iprod{T^{1/2}T^{-it}A^{*}T^{it}f}{g},
\end{align*}
for all $A \in H_{\tau}$ and $t \in \R$. Since $\Delta$ and $T$ is are injective positive selfadjoint operators, by the edge of the wedge theorem this identity uniquely extends by continuity and analyticity to
\begin{align*}
J\Delta^{1/2}(AT^{1/2}) = A^{*}T^{1/2}.
\end{align*}
The rigorous justification that the elements $AT^{1/2}$ belongs to $\Dom(\Delta^{1/2})$ can be given on the basis of an element belongs to the domain of a densely defined closed operator if and only if it belongs to its weak domain (see \cite[Proposition 10.20]{Nee}).
\end{proof}

The quantum harmonic oscillator of Section \ref{sec:number-phase} corresponds to the special case where
 $H = H^2(\mathbb{D})$, $\calM = \calL(H)$, $\tau$ is the standard trace on $\calM$, and the injective positive operator $T\in\calL(H)$ with $\tr(T)=1$ was given
as $T = e^{-\beta\H}/\tr(e^{-\beta \H})$. In that situation one has $T\in\calM$ and $T^{1/2} \in H_\tau = \HS$.

In the next sections, we construct modular times and POVM on the Borel subsets of $\R$, that satisfy the covariance relation defined below.

\begin{definition}
\label{def:cov}
Let $(\tau,T)$ define a modular time, and $E: \mathcal{B}(\R) \to \mathcal{E}(H)$ define a POVM. We say that $E$ is covariant with respect to the modular time if
$$
T^{it} E_{B} T^{-it} = E_{B+t}
$$
for all Borel sets $B \subseteq \mathbb{R}$ and all $t \in \R$.
\end{definition}

It is important to note that we do not insist that the operators $E_B$ belong to $H_{\tau}$. When they do, the covariance relation can be rewritten as
$$ \Delta^{it}E_B\Delta^{-it} = E_{B+t}. $$
Moreover, we do not insist on having the Hilbert space $H_{\tau}$ arise from Tomita--Takesaki modular theory. Whether or not our constructions could be modified to have these two additional features (and thus fully embed into Tomita--Takesaki modular theory) is left as an open problem (see Section \ref{sec:open}).

\section{Modular time for the free massless relativistic particle}\label{sec:pos-mom}

In this section we address the problem of modular time for the free massless relativistic particle in dimension $1$.
Its Hamiltonian $|D|:=\sqrt{-\Delta}$, where $\Delta$ is the second derivative operator, is the Fourier multiplier on $L^2(\R)$ corresponding to the multiplier $x\mapsto |x|$. Its part in the closed subspace of all $f\in L^2(\R)$
whose Fourier transform is supported on the nonnegative half line $\R_+$ can then be viewed mathematically as the continuous analogue of the number operator $N$. This simple observation suggests that we may try to redo the identification of thermal time more or less {\em mutatis mutandis}.

\paragraph{Modular time as a POVM in $H^2(\mathbb{U})$}

The reasoning of Section \ref{sec:number-phase} suggests that it should be possible to carry out an analogous construction in which
$H= H^2(\D)$ is replaced by $H$, where $$\U := \bigl\{(u,v) \in \R^2: \, u\in\R, \, v>0\bigr\}.$$
The space $H^2(\U)$ is defined as the Hilbert space of all holomorphic functions $f:\U\to \C$ for which
$$ \n f\n_{H^2(\U)} := \sup_{v>0} \n f(\cdot+iv)\n_{L^2(\R)}$$
is finite.
Under convolution with the Poisson kernel for the upper half-plane, this space is isometric to the
closed subspace in $L^2(\R)$ consisting of all functions whose Fourier--Plancherel transform vanishes on the the negative half-line.

If we think of the number operator on $H^2(\mathbb{D})$ as being given by the Fourier multiplier operator on $L^2(\mathbb{T})$ with multiplier $n$ on the nonnegative frequencies, it is natural to consider the operator on $H^2(\U)$ given by the Fourier multiplier operator on $L^2(\R)$ with multiplier $\xi$ on the nonnegative frequences. A positive selfadjoint operator
on $L^2(\R)$ implementing this property is the Poisson operator
$$|D|:= (D^2)^{1/2} = (-\Delta)^{1/2},$$ where $\Delta$ is the Laplace operator (not to be confused with the modular operator, which is also denoted by this symbol), and $D = \frac1i {\rm d}/{\rm d}x$ with domain
\begin{align*}
\Dom(|D|) := \bigl\{f\in H^2(\U): \, |D| f \in L^2(\R)\bigr\}.
\end{align*}
It is not hard to prove that the spectrum of this operator equals $[0,\infty)$.
This operator generates a $C_0$-semigroup of contractions $(P(t))_{t\ge 0}$ on $L^2(\R)$, the so-called Poisson semigroup,
given explicitly as a convolution semigroup
$$ P(t)f = p_t*f ,\quad t\ge 0, \ f\in L^2(\R),$$
where $$p_t(x) = \frac1\pi \frac{t}{t^2+x^2}, \quad t\ge 0, x\in\R,$$
is the {\em Poisson kernel} and the convolution $g*h$, with $g\in L^1(\R)$ and $h\in L^2(\R)$, is defined
by
\begin{equation}\label{eq:convolution} (g*h)(x) := \int_{\R^d} g(x-y)h(y)\ud y.
\end{equation}

The Fourier multiplier with symbol $\one_{[0,\infty)}$ can be expressed in the Borel functional calculus of the selfadjoint
operator $D$ as $\one_{\R_+}(D)$. Let $L_+^2(\R)$ denote the range of this operator, that is,
$$ L_+^2(\R) = \bigl\{f\in L^2(\R): \ \wh f(-\xi) = 0 \ \hbox{for almost all} \ \xi\in\R_+ \bigr\}.$$
Here, $f\mapsto \wh f$ is the Fourier--Plancherel transform, defined for functions $f\in L^1(\R)\cap L^2(\R)$ by
 \begin{equation}\label{eq:FT} \wh f(\xi) := \frac1{\sqrt{2\pi}} \int_{-\infty}^\infty f(x)\exp(-i x\xi)\ud x, \quad \xi\in \R,
 \end{equation}
and extended to an isometry from $L^2(\R)$ onto itself by density.
With the normalisations chosen in \eqref{eq:convolution} and \eqref{eq:FT},
$f\in L^1(\R^d)$ and let $g\in L^1(\R^d)$ or $g\in L^2(\R^d)$. For almost all $\xi\in \R^d$ we have
\begin{align}\label{eq:convolution-FT} \wh{f*g}(\xi) = \sqrt{2\pi} \wh f(\xi) \wh g(\xi).
\end{align}

It is well known that, for all $f\in L_+^2(\R)$, the function $F$ defined by
$$ F(x+iy):= P(y)f(x)$$ belongs to $\U$ with $\n F\n_{H} = \n f\n_2$. In the converse direction, for all $F\in H$ the limit
$$ f:= \lim_{y\downarrow 0} F(\cdot+iy)$$
exists, with convergence in the norm of $L^2(\R)$, and $F$ is recovered from $f$ as $F(x+iy) = P(y)f(x)$.
This correspondence defines a unitary mapping
\begin{align*}
\mathcal{P}: H^{2}(\mathbb{U}) \to L_+^2(\R), \quad  & \mathcal{P} : F \to f,
\end{align*}
whose inverse is given by $f\mapsto F$.

Let $E: \mathscr{B}(\R)\to \mathscr{E}(H^2(\U))$
be uniquely defined by
\begin{align*}
\mathcal{P} E_B F := \one_{[0,\infty)}(D) \one_{B}(X)  \mathcal{P} F
\end{align*}
for Borel sets $B\in \mathscr{B}(\R)$ and functions $F\in H^2(\U)$. Here,
 $\one_{B}(X)$ denotes the multiplication operator on $L^2(\R)$ defined by $\one_{B}(X)f:x \mapsto \one_{B}(x)f(x)$.

We will show that $B \mapsto E_{B}$ is a POVM in $H^2(\mathbb{U})$, and that it is covariant with respect to the flow $S \mapsto e^{it|D|}Se^{-it|D|}$ on the Hilbert space $H_{\tau}$, where $\tau$ is the trace on
$\calL(H^2(\mathbb{U}))$ defined by
$$\tau(A):=  \tr(Ae^{-\beta |D|}).$$

\begin{theorem}\label{thm:thermal-D}
Under the above assumptions, the following assertions hold:
\begin{enumerate}[\rm(1)]
\item\label{it:thermal-D1}
The assignment $B \mapsto E_B$ defines a POVM in $H^2(\mathbb{U})$.
\item\label{it:thermal-D2}
The pair $(\tau,e^{-\beta|D|})$ defines a modular time, in the sense of Definition \ref{def:modtime}.
\item\label{it:thermal-D3}
The POVM $E$ is covariant with respect to this modular time, in the sense of Definition \ref{def:cov}.
\end{enumerate}
\end{theorem}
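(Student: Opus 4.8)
The plan is to prove the three assertions of Theorem~\ref{thm:thermal-D} in order, mirroring the harmonic-oscillator computation of Section~\ref{sec:number-phase} but replacing the number operator $N$ by $|D|$ and the discrete spectral projections by the continuous spectral projections $\one_B(X)$ on $L^2(\R)$. For \ref{it:thermal-D1}, I would observe that $\mathcal{P}$ is unitary from $H^2(\U)$ onto $L_+^2(\R)$, so $E_B = \mathcal{P}^\star \one_{\R_+}(D)\one_B(X)\mathcal{P}$, and that this is exactly the Naimark-type construction $J^\star \wt P_B J$ of Section~\ref{subsec:NCL2}: take $\wt H = L^2(\R)$, let $\wt P_B = \one_B(X)$ be the (projection-valued) position PVM on $\R$, and let $J = \iota\circ\mathcal{P}: H^2(\U)\to L_+^2(\R)\embed L^2(\R)$ be the isometry obtained by composing $\mathcal{P}$ with the inclusion. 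Then $E_B = J^\star \wt P_B J$, and since $J^\star = \mathcal{P}^\star\circ\one_{\R_+}(D)$ (the inclusion's adjoint is the orthogonal projection $\one_{\R_+}(D)$), this is precisely the stated formula; hence $0\le E_B\le I$, $E_\R = I$, and countable additivity in the weak sense all follow from the corresponding properties of the PVM $\one_B(X)$, giving a POVM.

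For \ref{it:thermal-D2}, I would spell out the noncommutative $L^2$-space $H_\tau = L^2(\calL(H^2(\U)),\tau)$ with $\tau(A) = \tr(Ae^{-\beta|D|})$: here $e^{-\beta|D|}$ is a positive (unbounded-inverse but bounded) injective operator, and $\calM_\tau^2$ consists of those $A$ with $\tr(A^\star A e^{-\beta|D|})<\infty$, i.e. $A e^{-\beta|D|/2}$ Hilbert--Schmidt. The candidate flow is $\sigma_t: A\mapsto e^{it|D|}Ae^{-it|D|}$, and I need to check (a) that it preserves $\calM_\tau^2/\mathscr{N}_\tau$ and is isometric there, which is immediate since $\tau(\sigma_t(A)^\star\sigma_t(A)) = \tr(e^{-it|D|}A^\star A e^{it|D|}e^{-\beta|D|}) = \tr(A^\star A e^{-\beta|D|}) = \tau(A^\star A)$ using that $e^{it|D|}$ commutes with $e^{-\beta|D|}$ and cyclicity of the trace; (b) strong continuity in $t$, which follows from strong continuity of $t\mapsto e^{it|D|}$ together with a density/uniform-boundedness argument; and (c) that the generator has the form $i\log\Delta$ with $\Delta$ injective positive selfadjoint on $H_\tau$ — by Stone's theorem the generator is $i\mathscr{H}$ for some selfadjoint $\mathscr{H}$ on $H_\tau$, and one sets $\Delta := e^{-\mathscr{H}}$, exactly as in the passage preceding Theorem~\ref{thm:thermal-L}. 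The only genuinely new point compared with the oscillator case is that $e^{-\beta|D|}$ is \emph{unbounded}-inverse (equivalently, $T=e^{-\beta|D|}$ is not trace class, so $\tau$ is not a normal state but a genuine semifinite trace); this is harmless for Definition~\ref{def:modtime}, which only asks for a trace, not a normal state, but it does mean I must verify the domain hypothesis of Lemma~\ref{lem:modular} by hand, i.e. that $A\,e^{-\beta|D|/2}$ makes sense as an element of $H_\tau$ for the relevant $A$ (e.g. for $A = E_B$ I would check $E_B\,e^{-\beta|D|/2}$ is Hilbert--Schmidt, or at least bounded and in the closure).

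For \ref{it:thermal-D3}, the covariance relation $T^{it}E_B T^{-it} = E_{B+t}$ with $T = e^{-\beta|D|}$, so $T^{it} = e^{-i\beta t|D|}$, reduces via $E_B = \mathcal{P}^\star\one_{\R_+}(D)\one_B(X)\mathcal{P}$ and the intertwining $\mathcal{P}e^{it|D|} = e^{it|D|}\mathcal{P}$ (both sides acting as the Poisson flow) to the intertwining relation on $L^2(\R)$:
$$e^{-i\beta t|D|}\,\one_{\R_+}(D)\one_B(X)\,e^{i\beta t|D|} = \one_{\R_+}(D)\one_{B+\beta t}(X).$$
Wait — I should be careful about whether the scale factor $\beta$ appears; the theorem as stated writes $E_{B+t}$, so presumably the flow in the statement is already $S\mapsto e^{it|D|}Se^{-it|D|}$ (no $\beta$), i.e. $T^{it}$ is interpreted with the convention that the modular time parameter absorbs $\beta$, and I would follow whichever normalisation is fixed just above the theorem. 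Modulo that bookkeeping, the core identity is that conjugating $\one_B(X)$ by $e^{it|D|}$ on the subspace $L_+^2(\R)$, where $|D| = D$, is the same as conjugating by $e^{itD}$, and $e^{itD}$ is translation: $(e^{itD}f)(x) = f(x+t)$ (or $x-t$, sign to be fixed), so $e^{-itD}\one_B(X)e^{itD} = \one_{B+t}(X)$, and it commutes with the projection $\one_{\R_+}(D)$ since $e^{itD}$ does. The main obstacle, and the place where the continuous case is genuinely harder than Theorem~\ref{thm:thermal-L}, is the passage from the intertwining identity at the level of the unbounded operators (valid a priori only on a dense domain or after the edge-of-the-wedge analytic-continuation argument used for $\Delta^{1/2}$) to the clean operator identity $\sigma_t(E_B) = E_{B+t}$ on all of $H_\tau$: one must track domains carefully, use that $E_B$ is bounded so $E_B\,e^{-\beta|D|/2}\in H_\tau$, and invoke the weak-domain criterion (\cite[Proposition 10.20]{Nee}) as in Lemma~\ref{lem:modular}, rather than appealing to a trace-class/Hilbert--Schmidt shortcut that is no longer available because $T$ is unbounded. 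Everything else is a faithful transcription of the oscillator argument with sums replaced by integrals.
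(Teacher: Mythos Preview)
Your proposal is essentially correct, and for parts \ref{it:thermal-D1} and \ref{it:thermal-D2} it matches the paper's argument closely (your Naimark framing of \ref{it:thermal-D1} is a cleaner packaging of the same computation the paper does by hand, and your \ref{it:thermal-D2} is slightly more thorough, adding the strong-continuity check the paper omits).

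For part \ref{it:thermal-D3} your route is genuinely different from the paper's and considerably more efficient. The paper proves covariance by expanding both $\iprod{E_B e^{it|D|}F}{G}$ and $\iprod{e^{it|D|}E_{B+t}F}{G}$ as explicit double Fourier integrals over $[0,\infty)\times[0,\infty)$, involving $\wh{\one_B}$, $\wh f$, $\wh g$, and then matches the two expressions by a change of variables. Your observation --- that on $L_+^2(\R)$ one has $e^{it|D|}=e^{itD}$, that $e^{itD}$ commutes with $\one_{\R_+}(D)$, and that $e^{-itD}\one_B(X)e^{itD}=\one_{B+t}(X)$ is just translation --- bypasses the Fourier computation entirely and makes the mechanism transparent. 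Both approaches are valid; yours is shorter and more conceptual, while the paper's has the minor advantage of being a single self-contained calculation that does not require separately justifying the intertwining $e^{it|D|}|_{L_+^2}=e^{itD}|_{L_+^2}$ and the commutation with the Hardy projection.

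One correction: your final paragraph over-worries. The covariance in Definition~\ref{def:cov} is the identity $T^{it}E_BT^{-it}=E_{B+t}$ as \emph{bounded operators on $H=H^2(\U)$}, not as elements of $H_\tau$; the paper explicitly says it does not require $E_B\in H_\tau$. So no domain tracking, edge-of-the-wedge argument, or Hilbert--Schmidt membership of $E_B e^{-\beta|D|/2}$ is needed for \ref{it:thermal-D3} --- your translation identity already finishes the proof. Similarly, the hypothesis of Lemma~\ref{lem:modular} is not invoked in the paper's proof of \ref{it:thermal-D2}; Definition~\ref{def:modtime} only asks for a unitary $C_0$-group, which your checks (a)--(c) establish. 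Your flag about the $\beta$ scale factor is legitimate: the paper's own proof computes with $e^{it|D|}$ rather than $T^{it}=e^{-i\beta t|D|}$, so the normalization is indeed loose there.
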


\begin{proof}
To prove \ref{it:thermal-D1}, we first note that
 $$0\le E_B \le I,$$ that is, every $E_B$ is an effect. To prove this, set $f :=\mathcal{P}F$. Using that $f=\one_{[0,\infty)}(D)f$ and that
$\one_{[0,\infty)}(D) $ and $\one_{B}(X)$ are orthogonal projections, we obtain
\begin{align*}\iprod{E_B F}{F}_{H}
& = \iprod{\one_{[0,\infty)}(D) \one_{B}(X)  f}{f}_{L_+^2(\R)}
\\ & = \iprod{\one_{B}(X)  f}{\one_{[0,\infty)}(D) f}_{L_+^2(\R)}
\\ & = \iprod{\one_{B}(X)  f}{f}_{L^2(\R)}
\\ & = \iprod{\one_{B}(X) f}{\one_B(X) f}_{L^2(\R)}
\end{align*}
and therefore $0\le E_B$; the inequality $E_B\le I$ then follows from
\begin{align*}
\iprod{E_B F}{F}_{H^2(\mathbb{U})} &=
\n \one_B(X) f\n_{L^2(\R)}^2
\le \n f\n_{L^2(\R)}^2  = \n F\n_{H^2(\mathbb{U})}^2 \\
&= \iprod{F}{F}_{H^2(\mathbb{U})}.
\end{align*}

Turning to \ref{it:thermal-D2}, for $A \in \mathcal{L}(H^{2}(\mathbb{U}))$we have that
\begin{align*}
\tr(e^{it|D|}A^{*}Ae^{-it|D|}e^{-\beta|D|}) &= \tr(A^{*}Ae^{-it|D|}e^{-\beta|D|}e^{it|D|})\\
&= \tr(A^{*}Ae^{-\beta|D|}),
\end{align*}
and thus the group defined by $(t \mapsto [A\mapsto e^{it|D|}Ae^{-it|D|}])_{t\in \R}$ is unitary on $H_{\tau}$.

We finally prove \ref{it:thermal-D3}. For all $F,G\in H^2(\U)$, let $f =\mathcal{P}F$ and $g = \mathcal{P}G$. We have,
using \eqref{eq:convolution-FT},
\begin{align*}
\iprod{E_B e^{it|D|}F}{G}
& = \iprod{\one_{[0,\infty)}(D) \one_{B}(X)  \mathcal{P} e^{it|D|} F}{\mathcal{P}G}
\\ & = \iprod{\one_{B}(X) e^{it|D|}\mathcal{P} F}{\one_{[0,\infty)}(D) \mathcal{P}G}
\\ & = \iprod{\one_{B}(X) e^{it|D|}  f}{g}
\\ & = \iprod{\one_{B}(X) e^{itD}  f}{g}
\\ & =  \sqrt{2\pi}\int_0^\infty [\wh{\one_B}* (e^{iu\cdot}\wh f)](u)\ov{\wh g(u)}\ud u
\\ & =  \sqrt{2\pi}\int_0^\infty \int_0^\infty \wh{\one_B}(u-v)e^{itv}\wh f(v)\ov{\wh g(u)}\ud v\ud u.
\end{align*}
At the same time,
\begin{align*}
\iprod{&e^{it|D|}E_{B+t}F}{G}
    =  \iprod{\one_{[0,\infty)}(D) \one_{B+t}(X)  \mathcal{P}F}{\mathcal{P}e^{-it|D|}G}
\\ & =  \iprod{\one_{[0,\infty)}(D) \one_{B+t}(X) f}{e^{-it|D|}g}
\\ & =  \iprod{\one_{B+t}(X)f}{\one_{[0,\infty)}(D)e^{-it|D|}g}
\\ &= \iprod{\one_{B+t}(X) f}{{e^{-itD}g}}
\\ & = \sqrt{2\pi}\int_0^\infty e^{itu}\,\ov{\wh{g}(u)}\int_0^\infty \wh{\one_{B+t}}(u-v)\wh f(v)\ud v\ud u
\\ & = \sqrt{2\pi}\int_0^\infty e^{itu}\,\ov{\wh{g}(u)}\int_0^\infty \wh f(v)\int_\R e^{-i(u-v)z}\one_{B+t}(z)\ud z\ud v\ud u
\\ & = \sqrt{2\pi}\int_0^\infty e^{itu}\,\ov{\wh{g}(u)}\int_0^\infty \wh f(v)\int_\R e^{-i(u-v)(z+t)}\one_{B}(z)\ud z\ud v\ud u
\\ & = \sqrt{2\pi}\int_0^\infty \ov{\wh{g}(u)}\int_0^\infty e^{itv}\wh f(v)\int_\R e^{-i(u-v)z}\one_{B}(z)\ud z\ud v\ud u
\\ & = \sqrt{2\pi}\int_0^\infty \int_0^\infty \wh{\one_{B}}(u-v)e^{itv}\wh f(v)\ov{\wh{g}(u)}\ud v\ud u.
\end{align*}
\end{proof}

\section{Modular time for the noncommutative integral}\label{sec:NC}

In this section we consider a von Neumann algebra $\mathcal{M}$ contained in $\mathcal{L}(L^{2}(\R))$ generated by the Weyl calculus $a \mapsto a(Q,P)$ of a pair of operators $Q,P$ (defined later) for compactly based classical symbols $a$, endowed with the noncommutative integral $\tau$ as a semifinite trace,  that is,
\begin{align*}
\tau(a(Q,P)) &:= {\rm Res}(a(X,D)|D|^{-1}(I-\eta(D))) \\ &= \frac{1}{2} \int _{\R} (a_{0}(x,1)+a_{0}(x,-1))\ud x,
\end{align*}
where $D=\frac{1}{i}{\rm d}/{\rm d}x$, $X:f\mapsto [x\mapsto xf(x)]$,
$\eta \in C_{c}^{\infty}(\R)$ is an arbitrary cut-off function supported on $[-1/2,1/2]$, and $a_{0}$ denotes the principal symbol of $a(X,D)$.
Note that this definition only depends on the symbol $a$ (in fact on its principal part) and not on the choice of operators $Q,P$.
Here ${\rm Res}$ denotes the noncommutative residue (which can be seen as a Dixmier trace). See \cite[Chapters 10,11]{LSZ} for more information on the noncommutative integral, and \cite{Stein} for terminology and fundamental results on pseudo-differential operators and the Weyl calculus.

The expression for $\tau$ is reminiscent of the states $A \mapsto \tr(AT)$ considered in previous sections, but requires the use of $|D|^{-1}$ instead of $\exp(-\beta|D|)$. Consequently, the translation group generated by $iD$ will be replaced by the $C_{0}$ group of imaginary powers $(|D|^{it})_{t \in \R}$, i.e., the group generated by  $iP$ where $$P:= \ln |D|.$$

We show that the pair $(\tau,|D|^{-1})$  defines a modular time $(\Delta^{it})_{t \in \R}$ in the sense of Definition \ref{def:modtime}.
We then construct a POVM that is covariant with respect to this modular time in the sense of Definition \ref{def:cov}.
This will be constructed using generalised pseudo-differential operators of the form $a(Q,P)$ (in the sense of the Weyl calculus defined in \cite{vNP-JOT}), where $Q$ is a suitable replacement for the standard position operator, adapted to our modified momentum operator $P$.

\begin{definition}
Let $(S(t))_{t \in \R}$ be the unitary $C_0$-group on $L^{2}(\R)$ defined by
\begin{align*}
S(t): g \mapsto e^{t/2} \mathcal{F}^{-1}( \xi \mapsto \wh{g}(e^{t}\xi)),
\end{align*}
for all $t \in \R$. We denote its generator by $Q$.
\end{definition}

\begin{lemma}
The group $(S(t))_{t\geq 0}$ satisfies the following Weyl relations:
\begin{equation}
\label{eq:weyl}
e^{isP}S(t) =  e^{-ist} S(t)e^{isP}, \quad \forall s,t\in\R.\end{equation}
\end{lemma}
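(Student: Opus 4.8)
The claim is the Weyl relation $e^{isP}S(t) = e^{-ist}S(t)e^{isP}$ for the group $(S(t))_{t\in\R}$ of $L^2$-dilations (weighted to be unitary) and the self-adjoint operator $P = \ln|D|$. The cleanest route is to conjugate everything to the Fourier side, where $D$ becomes multiplication by $\xi$, $|D|$ becomes multiplication by $|\xi|$, and $P = \ln|D|$ becomes multiplication by $\ln|\xi|$. Then $e^{isP}$ acts on $\wh g$ by multiplication with $\xi \mapsto e^{is\ln|\xi|} = |\xi|^{is}$, while $S(t)$, by its very definition via $\mathcal{F}^{-1}(\xi\mapsto \wh g(e^t\xi))$ together with the prefactor $e^{t/2}$, acts on $\wh g$ by the unitary scaling $\wh g(\xi) \mapsto e^{t/2}\wh g(e^t\xi)$. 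So it suffices to check the relation for these two explicit operators on $L^2(\R_\xi)$: multiplication by $|\xi|^{is}$ and the scaling $(V(t)h)(\xi) = e^{t/2}h(e^t\xi)$.

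\textbf{Key steps.} First I would record that $\mathcal{F}$ is unitary and that conjugating $P$ and $S(t)$ by $\mathcal{F}$ yields, respectively, $M_{\ln|\xi|}$ (multiplication) and $V(t)$ as above; this reduces the whole statement to a relation among bounded/explicit operators on the Fourier side. Second, compute the composition in one order: $(M_{|\xi|^{is}} V(t) h)(\xi) = |\xi|^{is}\,e^{t/2} h(e^t\xi)$. Third, compute the other order: $(V(t) M_{|\xi|^{is}} h)(\xi) = e^{t/2}\,|e^t\xi|^{is}\, h(e^t\xi) = e^{t/2}\, e^{ist}\,|\xi|^{is}\, h(e^t\xi)$, using $|e^t\xi|^{is} = (e^t)^{is}|\xi|^{is} = e^{ist}|\xi|^{is}$. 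Comparing, $M_{|\xi|^{is}}V(t) = e^{-ist} V(t) M_{|\xi|^{is}}$, which is exactly the asserted relation after conjugating back by $\mathcal{F}^{-1}$. One should also note at the start that both sides of \eqref{eq:weyl} are genuinely defined: $S(t)$ and $e^{isP}$ are unitary $C_0$-groups (the former by inspection — it is a scaling made unitary by the $e^{t/2}$ factor — and the latter by the functional calculus, since $P=\ln|D|$ is self-adjoint), so the identity is an identity of everywhere-defined bounded operators and no domain subtleties arise.

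\textbf{Main obstacle.} There is essentially no analytic obstacle; the only thing requiring a little care is bookkeeping of the dilation convention and the normalising constant $e^{t/2}$, i.e.\ making sure one uses the definition of $S(t)$ exactly as written (with $\wh g(e^t\xi)$, not $\wh g(e^{-t}\xi)$) so that the sign of $st$ in the exponent comes out as $-ist$ rather than $+ist$. A secondary, purely expository point is to justify that $e^{isP}$ acts as multiplication by $|\xi|^{is}$ on the Fourier side: this is immediate from $P = \ln|D|$ and the spectral/functional calculus, since $D$ is unitarily equivalent via $\mathcal{F}$ to multiplication by $\xi$, hence $|D|$ to multiplication by $|\xi|$, hence $e^{isP} = e^{is\ln|D|} = |D|^{is}$ to multiplication by $|\xi|^{is}$. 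Once these conventions are pinned down, the proof is the three-line computation above.
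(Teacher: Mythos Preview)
Your proposal is correct and follows essentially the same route as the paper: both pass to the Fourier side, where $e^{isP}$ acts as multiplication by $|\xi|^{is}$ and $S(t)$ acts as the unitary dilation $\wh f(\xi)\mapsto e^{t/2}\wh f(e^t\xi)$, and then verify the relation by the same direct comparison $|e^t\xi|^{is}=e^{ist}|\xi|^{is}$. Your write-up is a bit more explicit about the unitary conjugation and domain issues, but the computation is identical.
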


\begin{proof}
For $f \in L^{2}(\R)$ and $\xi \in \R$ we have
\begin{align*}\  \wh{e^{isP}S(t)f}(\xi) & = |\xi|^{is}\wh {S(t)f}(\xi) = e^{t/2}|\xi|^{is}\wh{f}(e^t \xi)
\intertext{as well as}
\wh{S(t)e^{isP}f}(\xi) &= e^{t/2}|e^{t}\xi|^{is}\wh f(e^t \xi)
= e^{ist}e^{t/2}|\xi|^{is}\wh{f}(e^t \xi).
\end{align*}
\end{proof}

\paragraph{Modular time as a POVM}

We will use the group functional calculi for $P$ and $Q$, defined by
\begin{align*}
f(P) &:= \frac{1}{\sqrt{2\pi}} \int \limits _{\R} \wh{f}(\xi) \exp(i\xi P) \ud \xi,\\
g(Q) &:= \frac{1}{\sqrt{2\pi}} \int \limits _{\R} \wh{g}(\xi) \exp(i\xi Q) \ud \xi,\end{align*}
for $\wh{f},\wh{g} \in L^{1}(\R)$, as well as the Weyl functional calculus of \cite{vNP-JOT},  defined
for $a \in S^{0}(\R^{2})$ by
$$
a(Q,P) = \frac{1}{2\pi} \int \limits _{\R^{2}} \widehat{a}(u,v) \exp(\frac{i}{2} u\cdot v)\exp(iuQ)\exp(ivP)\ud u\ud v.
$$

\begin{theorem}\label{thm:thermal-Dixmier}
For any Borel set $B\in \mathscr{B}(\R)$ define $E_B\in \mathcal{A}$ by
$$ E_B := \one_{\R_+}(P)\one_B(Q)\one_{\R_+}(P),$$
the following assertions hold:
\begin{enumerate}[\rm(1)]
\item\label{it:Dixmier1}
The assignment $B \mapsto E_B$ defines a POVM on the range space $\ran(\one_{\R_+}(P))$.
\item\label{it:Dixmier2}
The pair $(\tau,|D|^{-1})$ defines a modular time, in the sense of Definition \ref{def:modtime}.
\item\label{it:Dixmier3}
The POVM $E$ is covariant with respect to this modular time, in the sense of Definition \ref{def:cov}.
\end{enumerate}
\end{theorem}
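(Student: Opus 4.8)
The three assertions are proved in turn, following the pattern of Theorem~\ref{thm:thermal-D}; the genuinely new ingredients are the translation-invariance of the residue trace $\tau$ (for assertion~\ref{it:Dixmier2}) and the Weyl relations of the preceding Lemma (for assertion~\ref{it:Dixmier3}). For assertion~\ref{it:Dixmier1}, the computation of $\iprod{E_B f}{f}$ for $f$ in the Hardy space $\ran(\one_{\R_+}(P))$ is the exact analogue of the one in the proof of Theorem~\ref{thm:thermal-D}: using $\one_{\R_+}(P)f=f$ and that $\one_{\R_+}(P)$ and $\one_B(Q)$ are orthogonal projections one obtains $\iprod{E_B f}{f}=\n\one_B(Q)f\n^2$, so that $0\le E_B\le I$ on $\ran(\one_{\R_+}(P))$. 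One then checks that $E_{\R}=\one_{\R_+}(P)$, the identity of the Hardy space, and deduces countable additivity of $B\mapsto E_B$ in the weak operator topology from that of the projection-valued measure $B\mapsto\one_B(Q)$ together with the continuity of $X\mapsto\one_{\R_+}(P)X\one_{\R_+}(P)$. Finally one argues that $E_B\in\mathcal A$, it being assembled from bounded Borel functions of $Q$ and of $P=\ln|D|$.

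For assertion~\ref{it:Dixmier2}, observe that $T=|D|^{-1}$ is injective, positive and selfadjoint, with unitary imaginary powers $T^{it}=|D|^{-it}=e^{-itP}$. Since $(|D|^{it})_{t\in\R}=(e^{itP})_{t\in\R}$, the Weyl relations of the Lemma show that conjugation by $|D|^{it}$ translates $Q$, so the flow $A\mapsto T^{it}AT^{-it}$ sends $a(Q,P)$ to $a(Q+t,P)$; that is, it merely translates the (compactly based, classical) Weyl symbol in its first variable. As the explicit formula $\tau(a(Q,P))=\tfrac12\int_{\R}(a_0(x,1)+a_0(x,-1))\ud x$ is manifestly invariant under such a translation, applying it with $A$ replaced by $A^{*}A$ yields $\tau\bigl((T^{it}AT^{-it})^{*}(T^{it}AT^{-it})\bigr)=\tau(A^{*}A)$. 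Hence the flow descends to an isometric group on $\calM_\tau^2/\mathscr{N}_{\tau}$ and extends to a unitary $C_0$-group on $H_\tau$, strong continuity being inherited from that of $(|D|^{it})_{t\in\R}$ on $L^2(\R)$; by Stone's theorem and the functional calculus, exactly as in Section~\ref{sec:number-phase}, this group is of the form $(\Delta^{it})_{t\in\R}$ with $\Delta$ injective, positive and selfadjoint, so $(\tau,|D|^{-1})$ is a modular time.

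For assertion~\ref{it:Dixmier3}, use that $|D|^{\pm it}=e^{\pm itP}$ commutes with the spectral projection $\one_{\R_+}(P)$ to write
\[
T^{it}E_B T^{-it} \;=\; \one_{\R_+}(P)\,\bigl(e^{-itP}\one_B(Q)\,e^{itP}\bigr)\,\one_{\R_+}(P),
\]
and then invoke the Weyl relation of the Lemma, which via the functional calculus of $Q$ identifies the middle factor $e^{-itP}\one_B(Q)e^{itP}$ with $\one_{B+t}(Q)$ (the sign of the shift being dictated by that relation). This gives $T^{it}E_B T^{-it}=E_{B+t}$, the covariance of Definition~\ref{def:cov}; when $B$ is bounded one has $E_B\in H_\tau$, and this may equivalently be read as $\Delta^{it}E_B\Delta^{-it}=E_{B+t}$.

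I expect the main obstacle to lie not in these essentially formal manipulations but in making the framework rigorous given that $\tau$ is only a \emph{semifinite singular} trace: one must confirm that the Weyl-quantised operators $a(Q,P)$ of compactly based classical symbols satisfy $\tau(A^{*}A)<\infty$ and span a dense subspace of a genuine noncommutative $L^2$-space $H_\tau$ on which the flow acts strongly continuously, and that $E_B$ belongs to $\mathcal A$ even though the spectral projections $\one_{\R_+}(P)$ of $P=\ln|D|$ are not themselves Weyl quantisations of compactly based symbols. A further point requiring care is that the POVM $E$ lives on the Hardy space $\ran(\one_{\R_+}(P))\subseteq L^2(\R)$, whereas the modular time $(\Delta^{it})$ is carried by the a priori different Hilbert space $H_\tau$; the covariance relation must therefore be formulated on the former while the modular structure resides on the latter.
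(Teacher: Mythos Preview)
Your proposal is correct and follows essentially the same route as the paper: for \ref{it:Dixmier1} the paper also computes $\iprod{E_Bf}{f}=\|\one_B(Q)\one_{\R_+}(P)f\|^2$ (though it pauses to verify explicitly, via the group functional calculus, that $\one_B(Q)$ is a selfadjoint projection rather than invoking the Borel calculus of the selfadjoint generator $Q$ as you do); for \ref{it:Dixmier2} the paper uses the Weyl relation to show $|D|^{it}a(Q,P)|D|^{-it}=a_t(Q,P)$ with $a_t(x,\xi)=a(x+t,\xi)$ and then the translation invariance of the residue formula, exactly as you outline; and for \ref{it:Dixmier3} the paper likewise conjugates the integral representation of $\one_B(Q)$ by $|D|^{it}$ using the Weyl relation. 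The one substantive step the paper includes that you omit is the verification, at the end of part~\ref{it:Dixmier2}, of the hypothesis of Lemma~\ref{lem:modular} that $AT^{1/2}$ is well defined in $H_\tau$: this is handled by observing that $a(Q,P)$ and $a(Q,P)(I-\eta(\exp P))$ represent the same class in $H_\tau$, so that $a(Q,P)\exp(-\tfrac12 P)$ agrees on a dense set with an element of $H_\tau$.
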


Observe that  $\ran(\one_{\R_+}(P))$ naturally takes over the role of the Hardy spaces used in the Sections \ref{sec:number-phase} and \ref{sec:pos-mom}.

\begin{proof}
To prove \ref{it:Dixmier1}, we first note that
\begin{align*}
\one_B(Q)^{2} & =\frac{1}{\sqrt{2\pi}} \int \limits _{\R}  \int \limits _{\R} \wh{\one_{B}}(\eta)\wh{\one_{B}}(\xi) \exp(i(\xi+\eta)Q)\ud \xi \ud\eta
\\ & = \frac{1}{\sqrt{2\pi}} \int \limits _{\R} \wh{\one_{B}}\star\wh{\one_{B}}(\xi) \exp(i\xi Q)\ud \xi
= \one_{B}(Q).\end{align*}
Moreover, for all $\xi \in \R$ and $f,g \in L^{2}(\R)$ we have
\begin{align*}
\iprod{S(\xi)f}{g} &= \iprod{f}{S(-\xi)g},
\end{align*}
by a change of variables.
and thus $\iprod{\one_{B}(Q)f}{g} = \iprod{f}{\one_{B}(Q)g}$, and
\begin{align*}
\iprod{E_{B} f}{f} & = \iprod{\one_B(Q)\one_{\R_+}(P)f}{\one_{\R_+}(P)f}
\\ & = \iprod{\one_B(Q)^{2}\one_{\R_+}(P)f}{\one_{\R_+}(P)f} \\
& = \|\one_B(Q)\one_{\R_+}(P)f\|^{2} \geq 0.
\end{align*}
Moreover $\|\one_B(Q)\one_{\R_+}(P)f\|^{2} =  \iprod{\one_B(Q)\one_{\R_+}(P)f}{\one_{\R_+}(P)f} \leq \|\one_B(Q)\one_{\R_+}(P)f\|\|f\|$,
which implies $E_{B} \geq 1$.
For Borel sets $(B_{n})_{n \in \N}$, we also have
\begin{align*}
\iprod{\one_{\bigcup_{n \in \N}B_{n}}(Q) f}{f} &= \frac{1}{\sqrt{2\pi}} \int \limits _{\R} \sum _{n \in \N} \wh{\one_{B_{n}}}(\xi) S(\xi)\ud \xi \\ &= \sum _{n \in \N} \iprod{\one_{B_{n}}(Q) f}{f},
\end{align*}
and $\iprod{\one_{\R}(Q)f}{f} = \iprod{S(0)f}{f} = \iprod{f}{f}$. This completes the proof of (1).

\smallskip
For \ref{it:Dixmier2}, we consider $a(Q,P) \in H_{\tau}$. Using \eqref{eq:weyl}, we first note that, for $t \in \R$,
\begin{align*}
&|D|^{it}a(Q,P)|D|^{-it} \\  &= \frac{1}{2\pi} \int \limits _{\R^{2}} \widehat{a}(u,v) e^{\frac{i}{2} u\cdot v}\exp(itP)\exp(iuQ)\exp(i(v-t)P)\ud u\ud v \\
& = \frac{1}{2\pi} \int \limits _{\R^{2}} \widehat{a}(u,v) e^{-itu}e^{\frac{i}{2} u\cdot v}\exp(iuQ)\exp(ivP)\ud u\ud v
\\ & = a_{t}(Q,P),
\end{align*}
where $a_{t}:(x,\xi)\mapsto a(x+t,\xi)$. This implies that
$t\mapsto |D|^{it}a(Q,P)|D|^{-it}$ is a group, and that
\begin{align*}
\|T^{it}a(Q,P)T^{-it}\|_{H_{\tau}} ^{2} &= \frac{1}{2} \int _{\R}
(a_{0}^{2}(x+t,1)+a_{0}^{2}(x+t,-1))\ud x \\ &= \|a(Q,P)\|_{H_{\tau}} ^{2}.
\end{align*}
To verify the abstract condition in Lemma \ref{lem:modular} that $AT^{1/2}\in H_\tau$ be well defined and an element of $H_\tau$, we argue as follows.
Noting that $a(Q,P)$ and $a(Q,P)(I-\eta(\exp(P))$ are both representatives of the same equivalence class in $H_{\tau}$, we have that $a(Q,P)\exp(-\frac{1}{2}P)$ agrees on a dense set with $a(Q,P)(I-\eta(\exp(P))\exp(-\frac{1}{2}P) \in H_{\tau}$.

\smallskip
We argue similarly for \ref{it:Dixmier3}, using the commutation relation
\eqref{eq:weyl} to obtain that, for all $t \in \R$ and $B \in \mathcal{B}(\R)$,
\begin{align*}
|D|^{it}& E_{B}|D|^{-it} \\ &= \frac{1}{\sqrt{2\pi}} \one_{\R_{+}}(P)  \int \limits _{\R} \wh{\one_{B}}(\xi) e^{itP}e^{i\xi Q} e^{-itP}\one_{\R_{+}}(P) \ud \xi\\
& = \frac{1}{\sqrt{2\pi}} \one_{\R_{+}}(P)  \int \limits _{\R} \wh{\one_{B}}(\xi) e^{-it\xi}e^{i\xi Q} \one_{\R_{+}}(P)\ud \xi = E_{B+t}.
\end{align*}
\end{proof}

\section{Open problems}\label{sec:open}

We believe that the following problems are worth investigating:

\begin{itemize}
 \item Do the results of Section \ref{sec:pos-mom} extend to the free relativistic particle with mass $m$?
\end{itemize}
The obstruction to overcome is that the algebra underlying our computations no longer works for the Hamiltonian
$\sqrt{m^2-\Delta}$ with $\Delta$ (the Laplace operator). In all likelihood, techniques involving Dirac operators have to be used to handle this case.

\begin{itemize}
 \item Embed Theorems \ref{thm:thermal-D} and \ref{thm:thermal-Dixmier} into the Tomita--Takesaki framework.
\end{itemize}

As was already pointed out, there are two obstructions to be overcome: (i) both theorems involve a non-normal semifinite trace; (ii) the operators $E_B$ in the POVM's constructed in these theorems generally do not belong to the von Neumann algebra $\calM$.
used in these theorems.

\begin{itemize}
 \item Identify a physical system whose equilibrium state is described by the non\-commut\-ative integral.
\end{itemize}

This question is speculative; we have included the treatment of the noncommutative integral simply because it invited itself on mathematical grounds.

\begin{itemize}
\item Interpret the ``observation" of time given in  \cite{Tanaka} as a POVM associated with a modular/thermal time.
\end{itemize}
The paper  \cite{Tanaka} provides a completely different approach to the thermodynamical origin of time compared to the Connes-Rovelli theory. It identifies an information geometric relationship between time and temperature (derived in a simple model from classical mechanics), and suggests using this relationship (stating that a curvature is constant equal to -1) as a way to ``observe" time.
\begin{itemize}
\item Interpret the time POVMs given here in the Page-Wootters formalism
(introduced in \cite{PW} and corrected in \cite{getall}), i.e., as internal quantum degrees of freedom).
\end{itemize}
This could well use the relativisation procedure from \cite{LovMiy}.

\bibliographystyle{plain}
\bibliography{literature-time}

\end{document}